\let\proof\@undefined                        
\let\endproof\@undefined                  
\algnewcommand{\algorithmicgoto}{\textbf{go to}}%
\algnewcommand{\Goto}[1]{\algorithmicgoto~\ref{#1}}%
\algnewcommand{\LineComment}[1]{\Statex \(\triangleright\) #1}
\algnewcommand{\LineCommentN}[1]{\Statex \hspace{1cm}\(\triangleright\) #1}
\newtheorem{thm}{Theorem}
\newtheorem{lem}{Lemma}
\newtheorem{defn}{Definition}
\newtheorem{problem}{Problem}
\providecommand{\norm}[1]{\lVert#1\rVert}
\let\oldbibliography\thebibliography
\renewcommand{\thebibliography}[1]{%
  \oldbibliography{#1}%
}
\newcommand{\zhu}[1]{{\color{black} #1}}
\title{Time-Invariant Polytopic and Interval Observers for Uncertain Linear Systems via Non-Square  Transformation} 
\author{%
Feiya Zhu, Tarun Pati, 
and Sze Zheng Yong 
\thanks{
F. Zhu, T. Pati and S.Z. Yong are with Mechanical and Industrial Engineering Department, Northeastern University, Boston, MA 02115 (email: {\tt \{zhu.feiy,pati.ta,s.yong\}@northeastern.edu}). This work was supported in part by NSF grant CNS-2313814.}
\vspace{-0.55cm}
}
\begin{document}

\maketitle
\thispagestyle{empty}

\begin{abstract}
This paper presents novel polytopic and interval observer designs for uncertain linear continuous-time (CT) and discrete-time (DT) systems subjected to bounded disturbances and noise. Our approach guarantees enclosure of the true state and input-to-state stability (ISS) of the polytopic and interval set estimates. 
Notably, our approach applies to \emph{all} detectable systems that are stabilized by any optimal observer design, utilizing a potentially non-square (lifted) time-invariant coordinate transformation based on polyhedral Lyapunov functions and mixed-monotone embedding systems that do not impose any positivity constraints, enabling feasible and optimal 
observer designs, even in cases where previous methods fail. 
The effectiveness of our approach is demonstrated through several examples of uncertain linear CT and DT systems.
 \end{abstract}

  \vspace{-0.1cm}
\section{Introduction} 
 \vspace{-0.08cm}
State estimation is crucial for many engineering systems, including autonomous vehicles and power grids, enabling effective monitoring, decision-making, and control. Set-valued observers offer a robust approach, 
particularly when dealing with uncertain systems where the uncertainties are set-valued or non-stochastic, or when their distributions are unknown.

\noindent\emph{Literature Review.} 
For applications requiring strict accuracy bounds, e.g., safety-critical systems, set-valued estimation methods are preferred over stochastic approaches \cite{blanchini2012convex}. However,  characterizing the exact set of possible states is computationally prohibitive, even for simple linear systems with polytopic uncertainties \cite{shamma1999set}. Consequently, simpler geometric shapes, such as ellipsoids, intervals, or (constrained) zonotopes, are used to over-approximate these sets \cite{tahir2021synthesis,efimov2016design,briat2016interval,khajenejad_H_inf_2022,rego2020guaranteed,khajenejad2021guaranteed2,pati2022l₁,pati2025computationally}. 

Although polytopes provide tighter over-approximations of state sets compared to simpler convex shapes, the development of observers with polytopic estimates remains significantly under-explored. Current approaches, such as those in \cite{rego2020guaranteed, khajenejad2021guaranteed2}, primarily concentrate on guaranteed state estimation through reachability analysis and computational geometry, often neglecting observer gain design and stability analysis, and are predominantly limited to discrete-time systems. Notably, polytopic observers that incorporate both observer gain computation and stability analysis for both discrete and continuous-time systems are largely absent from the literature. Further, note that polytopic observers in this paper are to be distinguished from non-set-valued observers for polytopic systems with polytopic parameter variations.

On the other hand, interval observers have received significant attention for both linear and nonlinear systems \cite{tahir2021synthesis,efimov2016design,briat2016interval,pati2022l₁,khajenejad_H_inf_2022}. These observers typically rely on ensuring the error dynamics are simultaneously stable and positive/cooperative, often leading to complex design procedures. This has spurred the use of time-invariant or time-varying system transformations, e.g., \cite{chambon2016overview,mazenc2011interval,mazenc2014interval}, and additional degrees of freedom \cite{pati2022l₁,pati2025computationally} to broaden their applicability. However, as demonstrated in \cite{Mazenc2010,mazenc2014interval}, certain continuous and discrete-time systems with complex eigenvalues in their unobservable subsystems preclude time-invariant similarity transformations, motivating the development of time-varying approaches with time-varying transformations \cite{mazenc2011interval,mazenc2014interval} or time-varying reachability computations \cite{meslem_Hinf_2020,meslem2021reachability}. Despite these efforts, time-varying observers can be complex to implement, susceptible to numerical errors, and may yield conservative interval estimates. This motivates our exploration of potentially non-square time-invariant transformations to enable the design of stable, time-invariant polytopic and interval observers.

\noindent \emph{Contributions.} 
This paper addresses the open question of the existence of time-invariant set-valued observers for linear detectable systems by proposing a novel, potentially non-square, time-invariant coordinate transformation, drawing inspiration from infinity norm-based polyhedral Lyapunov functions for linear systems in\cite{polanski1995infinity,molchanov1986liapunov,christophersen2007further}. This transformation, which may involve state lifting \zhu{when it is} non-square, allows us to design feasible and optimal time-invariant polytopic and interval observers for any detectable linear system, overcoming the limitations of methods employing similarity transformations or additional degrees of freedom \cite{chambon2016overview,mazenc2011interval,mazenc2014interval,pati2022l₁,pati2025computationally}. In particular, we introduce a polytopic and interval design based on mixed-monotone embedding systems \cite{khajenejad2023tight}, which  are proven to be correct (i.e., the polytopic and interval set estimates enclose the true state) and whose set volumes are input-to-state stable (contracting \zhu{over time}), without imposing any additional positivity constraints. A significant corollary of our approach is the derivation of existence conditions for time-invariant interval observers (without state lifting), applicable to all detectable systems with real closed-loop eigenvalues and certain systems with complex eigenvalues. The efficacy of our method is demonstrated through comparative examples with time-varying approaches for both discrete- and continuous-time linear systems.
\vspace{-0.1cm}
 \section{Preliminaries}
 \vspace{-0.08cm}
 
 {\emph{Notation.} Let $ \mathbb{R}^n$, $\mathbb{R}^n_{>0}$, $\mathbb{R}^{n  \times p}$ denote the $n$-dimensional Euclidean space, the set of positive $n$-dimensional vectors, and the matrices of size $n$ by $p$, respectively. $\mathbb{N}_n$ represents natural numbers up to $n$, and  $\mathbb{N}$ denotes the natural numbers. For a vector $v \in \mathbb{R}^n$, its  $\infty$-norm is $\|v\|_{\infty}\triangleq \max_i |v_i|$. 
 For a matrix $M\in \mathbb{R}^{n  \times p}$, $M_{ij}$ refers to the entry in the $i$-th row and $j$-th column. The element-wise signum function of  $M$ is denoted by $\textstyle{\mathrm{sgn}}(M)$,  while $|M|\triangleq M^{\oplus}+M^{\ominus}$ is its element-wise absolute value, with $M^{\oplus}\triangleq \max(M,\mathbf{0}_{n\times p})$ and $M^{\ominus}\triangleq M^{\oplus}-M$. For a square matrix $M\in \mathbb{R}^{n  \times n}$, let $M^\text{d}$ denote the diagonal matrix with the diagonal elements of $M$, and $M^\text{nd} \triangleq M-M^\text{d}$ represent the off-diagonal entries. The ``Metzlerized" matrix\footnote{A Metzler matrix  is a matrix with nonnegative off-diagonal elements.} of $M$ is defined as $M^{\text{m}} \triangleq M^\text{d}+|M^\text{nd}|$. Moreover, its logarithmic and induced matrix norms corresponding to the $\infty$-norm are $\mu_\infty(M)=\max_i (M_{ii} +\sum_{j\neq i} |M_{ij}|)$ and $\|M\|_\infty=\max_i  \sum_{j} |M_{ij}|$, respectively. All inequalities involving vectors or matrices are interpreted element-wise, and the matrices of zeros and ones of dimension $n \times p$ are denoted as $\mathbf{0}_{n \times p}$ and $\mathbf{1}_{n \times p}$, respectively. 
 A function $\alpha: \mathbb{R}_+ \to \mathbb{R}_+$ is of class $\mathcal{K}$ if it is continuous, positive definite (i.e., $\alpha(x)=0$ for $x=0$; $\alpha(x)>0$ otherwise), and strictly increasing, and of class $\mathcal{K}_{\infty}$ if it is additionally unbounded. A function $\lambda : \mathbb{R}_+\times \mathbb{R}_+ \to \mathbb{R}_+$ is of class $\mathcal{KL}$ if, for every fixed  $t\geq 0$, $\lambda(\cdot,t)$ is of class $\mathcal{K}$ and for each fixed $s \geq 0$, $\lambda(s,t)$ decreases to zero as $t \to \infty$. Further, an interval $\mathcal{I}$ in $n$-dimensional space, defined as $[\underline{x}, \overline{x}] \subseteq \mathbb{R}^n$, comprises all vectors $x$ belonging to $\mathbb{R}^{n}$ satisfying $\underline{x} \leq x \leq \overline{x}$.

\vspace{-0.15cm}

\begin{defn}[Embedding System \cite{khajenejad2023tight}]\label{def:embedding}
For an uncertain linear system represented by $x^+=Ax+Ww$ with noise $w\in[\underline{w},\overline{w}]$ and initial state $x_0\in [\underline{x}_0,\overline{x}_0]$, its corresponding linear embedding system is a
$2n$-dimensional linear system with initial condition $\begin{bmatrix} \underline{x}_0^\top & \overline{x}_0^\top\end{bmatrix}^\top$, given as:
\begin{align} \label{eq:embedding}
\begin{bmatrix}{\underline{x}}_t^+ \\ {\overline{x}}_t^+ \end{bmatrix}=\begin{bmatrix}  A^{\uparrow}\underline{x}_t-A^{\downarrow}\overline{x}_t + W^{\oplus}\underline{w}_t-W^{\ominus}\overline{w}_t \\  A^{\uparrow}\overline{x}_t-A^{\downarrow}\underline{x}_t + W^{\oplus}\overline{w}_t-W^{\ominus}\underline{w}_t \end{bmatrix}.
\end{align}
where for the continuous-time case, $x^+_t=\dot{x}_t$, $A^{\uparrow}=A^d+A^{nd,\oplus}$ and $A^{\downarrow}=A^{nd,\ominus}$, while for the discrete-time case, $x^+_t=x_{t+1}$, $A^{\uparrow}=A^\oplus$ and $A^{\downarrow}=A^{\ominus}$.
\end{defn}
\vspace{-0.15cm}

Note that according to \cite[Proposition 3]{khajenejad2023tight}, the embedding system in \eqref{eq:embedding} has a \emph{state framer property}, i.e., its states are guaranteed to frame the unknown true states $x_t$: $\underline{x}_t\le x_t\le \overline{x}_t$, $\,\forall\,t\in\mathbb{T}$, for both discrete- and continuous-time systems.

Finally, one important property of the logarithmic and vector norms induced by the $\infty$-norm that we will later leverage for our observer design is:
\vspace{-0.2cm}
\begin{lem}\label{lem:normProp}
    Consider any $Q \in \mathbb{R}^{m \times m}$ and its Metzlerized and absolute value matrices, $Q^\textnormal{m}$ and $|Q|$, respectively. Then, 
    $$\mu_\infty(Q)=\mu_\infty(Q^\textnormal{m}), \quad \|Q\|_\infty=\||Q|\|_\infty.$$
\end{lem}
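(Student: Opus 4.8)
The plan is to prove both equalities by directly unpacking the definitions of the two norms and observing that each depends on its argument only through quantities that are invariant under the respective operation. Both claims ultimately reduce to the trivial identity $\bigl||Q_{ij}|\bigr| = |Q_{ij}|$, so no estimation or inequality chasing is required; the only point needing care is tracking precisely which entries are altered by Metzlerization versus by the element-wise absolute-value map.

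First I would establish the entrywise description of $Q^\text{m}$. Since $Q^\text{nd}$ has zero diagonal, so does $|Q^\text{nd}|$, and hence $Q^\text{m} = Q^\text{d} + |Q^\text{nd}|$ satisfies $(Q^\text{m})_{ii} = Q_{ii}$ together with $(Q^\text{m})_{ij} = |Q_{ij}|$ for all $j \neq i$. This is the crucial observation: Metzlerization leaves the diagonal untouched and takes absolute values only off the diagonal, which matches exactly how $\mu_\infty$ treats its argument, namely linearly on the diagonal and through absolute values off the diagonal.

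Substituting this description into the definition then yields $\mu_\infty(Q^\text{m}) = \max_i\bigl((Q^\text{m})_{ii} + \sum_{j\neq i}|(Q^\text{m})_{ij}|\bigr) = \max_i\bigl(Q_{ii} + \sum_{j\neq i}\bigl||Q_{ij}|\bigr|\bigr) = \max_i\bigl(Q_{ii} + \sum_{j\neq i}|Q_{ij}|\bigr) = \mu_\infty(Q)$, where the third equality is the identity $\bigl||Q_{ij}|\bigr| = |Q_{ij}|$. For the induced-norm claim I would use that $|Q|$ has entries $(|Q|)_{ij} = |Q_{ij}|$ for \emph{all} $i,j$, so $\||Q|\|_\infty = \max_i \sum_j |(|Q|)_{ij}| = \max_i \sum_j |Q_{ij}| = \|Q\|_\infty$ by the same identity applied now to every entry.

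There is no genuine obstacle here, as the lemma is essentially definitional. The one subtlety worth flagging is not to conflate the two operations: replacing $Q$ by $|Q|$ would change the signed diagonal contribution inside $\mu_\infty$, which is exactly why the logarithmic-norm statement is phrased with $Q^\text{m}$ (preserving the diagonal) rather than $|Q|$, whereas the induced-norm statement legitimately uses $|Q|$ since $\|\cdot\|_\infty$ takes absolute values of every entry, diagonal included.
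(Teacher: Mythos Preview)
Your proposal is correct and follows essentially the same approach as the paper, which simply states that the result follows directly from the definitions of $\mu_\infty$, $\|\cdot\|_\infty$, Metzlerization, and the absolute value operation. You have just made explicit the entrywise computation that the paper leaves implicit, including the helpful remark on why $Q^\text{m}$ (not $|Q|$) is the appropriate comparison for $\mu_\infty$.
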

\vspace{-0.3cm}
\begin{proof}
    This result follows directly from the definitions of logarithmic and induced matrix norms corresponding to the $\infty$-norm as well as the definitions of the Metzlerization and absolute value operations.
\end{proof} 

\vspace{-0.35cm}
\section{Problem Formulation} \label{sec:Problem}
\vspace{-0.1cm}
We consider linear discrete-time (DT) or continuous-time (CT) systems with additive bounded uncertainties:
\vspace{-0.1cm}
\begin{align} 
\begin{array}{ll}
\mathcal{G}:
\begin{cases}
\begin{array}{rl}x_t^{+}\!\!\! &= Ax_t + Bu_t + Ww_t, \\
y_t \!\!\!&= Cx_t + Du_t + Vv_t,\end{array}
\label{eq:dynamics}
\end{cases} \quad \forall\;t \in \mathbb{T},
\end{array}
\end{align}
where, if $\mathcal{G}$ is a CT system, $x_t^+=\dot{x}_t, \mathbb{T} = \mathbb{R}_{\ge 0}$ and for DT $\mathcal{G}$, $x_t^+=x_{t+1}, \mathbb{T}= \{0\}\cup \mathbb{N}$. Additionally, the system state is $ x_t \in \mathcal{X} \subset \mathbb{R}^m $, the measured output $ y_t \in \mathbb{R}^l $, the control input $ u_t \in \mathbb{R}^s $,  the process noise $w_t \in \mathcal{W} \triangleq [\underline{w},\overline{w}] \subset \mathbb{R}^{n_w}$, and the measurement noise $v_t \in \mathcal{V} \triangleq [\underline{v},\overline{v}] \subset \mathbb{R}^{n_v}$, with an uncertain initial state $x_0\in\mathcal{X}_0=[\underline{x}_0,\overline{x}_0]$. Further, we assume that the input $u_t$ (if applicable) and output $y_t$ for all $t \in \mathbb{T}$, and the uncertainty bounds $\underline{w}$, $\overline{w}$, $\underline{v}$, $\overline{v}$, $\underline{x}_0$, and $\overline{x}_0$ are  known, and that the pair $(A,\;C)$ is detectable. Without loss of generality, we can also rescale $V$ and $W$ to ensure $\overline{v} = -\underline{v} = \mathbf{1}_{n_v \times 1}$ and  $\overline{w} = -\underline{w} = \mathbf{1}_{n_w \times 1}$, which can be useful to ensure their effects are uniformly penalized.

The problem of interest can be stated as:\vspace{-0.2cm} 
\begin{problem}\label{prob:PIO}
Given a detectable  uncertain CT/DT linear system in \eqref{eq:dynamics}, design  (optimal) time-invariant polytopic and interval observers with set estimates $\hat{X}_t$  for all $t \in \mathbb{T}$ that (i) enclose/contain the true state $x_t$, i.e., $x_t \in \hat{X}_t,\, \forall t\in \mathbb{T}$ (correctness property) and (ii) 
are input-to-state stable (ISS) with respect to their volumes $\textnormal{Vol}(\mathcal{X}_t)$, i.e.,  there exist class $\mathcal{KL}$ and $\mathcal{K}$ functions $\beta$ and $\gamma$ such that $\forall t \in \mathbb{T}$,
\begin{align}\label{eq:ISS}
\textnormal{Vol}(X_t) \leq \beta(\textnormal{Vol}(\mathcal{X}_0),t)+\gamma(\textnormal{Vol}(\mathcal{W})+\textnormal{Vol}(\mathcal{V})). 
\end{align}
\end{problem}

\vspace{-0.25cm}
\section{Proposed Polytopic and Interval Observers} \label{sec:observer}\vspace{-0.05cm}

First, since the pair $(A,\;C)$ is detectable by assumption, we can construct an equivalent `closed-loop' system for the system given in \eqref{eq:dynamics} by adding a zero term $L(y_t-Cx_t-Du_t-Vv_t)$ to the state equation and rearranging, to obtain:
\begin{align}
\!\mathcal{G}_{eq}:\!
\begin{cases}\!\begin{array}{rl}
x_t^{+}
\!\!&=A_{cl}x_t \!+\! Ly_t \!+\! Ww_t\!-\!LVv_t\!+\!(B\!-\!LD)u_t,\\
y_t \!\!&=C x_t+Du_t+V v_t,
\end{array}\end{cases}\hspace{-0.6cm}
\label{eq:dynamics_cl}
\end{align}
with any $L$ such that $A_{cl}=A-LC$ is Hurwitz (CT) or Schur (DT) stable. In particular, $L$ can be designed using optimal estimators such as $\mathcal{H}_\infty$ observers to minimize the $\mathcal{H}_\infty$ system gain of an estimation error system $\tilde{x}^+=(A-LC)\tilde{x}+W w_t -LV v_t$. 
However, it is well-known that interval observer designs for Hurwitz/Schur stable systems as in \eqref{eq:dynamics_cl} remain challenging and often require time-varying 
observers that can be complicated to implement, prone to numerical errors, and conservative. Hence, in the following Section \ref{sec:Non-Square}, we introduce a new time-invariant coordinate transformation that is potentially non-square. We then show in Section \ref{sec:obsv} that polytopic and interval observers can be designed for the transformed coordinates by leveraging mixed-monotone decompositions, which are correct and ISS without incurring any conservatism, thus solving Problem \ref{prob:PIO}.


\vspace{-0.2cm}
\subsection{Non-Square Time-Invariant Transformation}\label{sec:Non-Square}
\vspace{-0.1cm}

To overcome the complexity and numerical difficulties of time-varying observers, as well as to resolve a long-standing question of the existence of time-invariant observers for linear detectable CT and DT systems, this paper introduces a (non-square) time-invariant coordinate transformation that is inspired by results on infinity norm-based polyhedral Lyapunov functions for linear systems in \cite{polanski1995infinity,molchanov1986liapunov,christophersen2007further}: \vspace{-0.2cm}

\begin{thm}\label{thm:transformation}
Let $A_{cl} \in \mathbb{R}^{n \times n}$. A linear CT or DT system $ {x}^+_t = A_{cl} x_t $ is asymptotically stable if and only if there exist real matrices $ P \in \mathbb{R}^{m \times n} $, with full rank  (i.e., $\text{rank}(P) = n $, $m \geq n $), and $ Q \in \mathbb{R}^{m \times m} $ such that:
\vspace{-0.1cm}
\begin{subequations}
\begin{gather}
    PA_{cl} = QP \textnormal{ and } \mu_{\infty}(Q) < 0 \quad \textnormal{(CT)},\\
    PA_{cl} = QP \textnormal{ and } \|Q\|_{\infty} < 1 \quad \textnormal{(DT)},
\end{gather}
\end{subequations}
where $ \mu_{\infty}(Q) $ and \( \|Q\|_{\infty} \) are the logarithmic norm induced by the vector $ \infty $-norm \cite{Torsten1975OnLogarithmic} and the induced matrix $ \infty $-norm, respectively. Additionally, the function $V(x) = \|Px\|_{\infty}$ is a (polyhedral) Lyapunov function, satisfying $\dot{V}(x_t) < 0 $ (CT) or $ V(x_{t+1}) < V(x_t) $ (DT) for all $ x_t \neq 0 $. 
\end{thm}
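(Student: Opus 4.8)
The plan is to prove the two directions separately, with sufficiency being a direct norm computation and necessity resting on a converse polyhedral-Lyapunov construction combined with polar duality.

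\emph{Sufficiency.} Suppose $P$ (full column rank) and $Q$ satisfy $PA_{cl}=QP$ with the stated norm bound. I would introduce the lifted coordinate $z_t \triangleq Px_t \in \mathbb{R}^m$; the intertwining relation then immediately gives $z^+_t = Px^+_t = PA_{cl}x_t = QPx_t = Qz_t$, so the (constrained) lifted state evolves under $Q$. For the DT case, submultiplicativity of the induced $\infty$-norm yields $V(x_{t+1})=\|Qz_t\|_\infty\le\|Q\|_\infty\|z_t\|_\infty=\|Q\|_\infty V(x_t)$, which is strictly less than $V(x_t)$ for $x_t\neq 0$ since $\|Q\|_\infty<1$ and full rank of $P$ forces $V(x_t)>0$. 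For the CT case, I would invoke the defining property of the logarithmic norm — that the upper Dini derivative of $t\mapsto\|z_t\|_\infty$ along $\dot z_t=Qz_t$ satisfies $D^+\|z_t\|_\infty\le\mu_\infty(Q)\|z_t\|_\infty$ — to conclude $\dot V(x_t)\le\mu_\infty(Q)V(x_t)<0$. Since $P$ is full rank, $V(x)=\|Px\|_\infty$ is a genuine (polyhedral) norm, hence positive definite and radially unbounded, so the strict decrease certifies asymptotic stability by Lyapunov's theorem; this simultaneously establishes the final Lyapunov-function claim.

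\emph{Necessity.} This is the substantive direction and the part I expect to be the main obstacle. Assuming $A_{cl}$ is Schur/Hurwitz, the strategy is: (i) invoke the classical converse theorem for polyhedral Lyapunov functions (following \cite{molchanov1986liapunov,polanski1995infinity}) — concretely, the contractive polytope arising from the iterated images $\lambda^{-k}A_{cl}^k$ of a symmetric seed set for some $\lambda\in(\rho(A_{cl}),1)$ — to produce a bounded, symmetric, contractive polytope $\mathcal{P}$; (ii) represent its level set as $\mathcal{P}=\{x:\|Px\|_\infty\le 1\}$, where the rows of $P$ are the scaled facet normals, so that boundedness forces $\text{rank}(P)=n$ while the number of facets $m$ may exceed $n$ — exactly where the non-square lifting enters, accommodating complex closed-loop eigenvalues for which no square $P$ can work; (iii) convert contractivity into the algebraic relation $PA_{cl}=QP$ with the norm bound. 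For step (iii) in the DT case I would pass to the polar polytope $\mathcal{P}^{\circ}=\text{conv}\{\pm P_i^\top\}$: the inclusion $A_{cl}\mathcal{P}\subseteq\lambda\mathcal{P}$ is equivalent to $A_{cl}^\top$-contractivity of $\mathcal{P}^\circ$, which expresses each $P_iA_{cl}$ as a combination $\sum_j Q_{ij}P_j$ of facet normals with $\sum_j|Q_{ij}|\le\lambda<1$; stacking rows gives $PA_{cl}=QP$ and $\|Q\|_\infty\le\lambda<1$.

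The CT instance of step (iii) is the most delicate point, since contractivity there is infinitesimal: $A_{cl}x$ must point strictly inward at each active facet, which is naturally a one-sided/Dini statement rather than a set inclusion. I would handle it either (a) by a Euler-type reduction, noting that CT-contractivity of $\mathcal{P}$ under $A_{cl}$ is equivalent to DT-contractivity under $I+\tau A_{cl}$ for all small $\tau>0$, and then using $\lim_{\tau\to 0^+}\tau^{-1}(\|I+\tau Q\|_\infty-1)=\mu_\infty(Q)$ with the row-wise construction above to extract $Q$ with $PA_{cl}=QP$ and $\mu_\infty(Q)<0$; or (b) directly via Farkas/LP duality applied facet-by-facet to the boundary inequality $\text{sgn}(P_ix)\,P_iA_{cl}x<0$, yielding a decomposition $P_iA_{cl}=\sum_j Q_{ij}P_j$ with $Q_{ii}+\sum_{j\neq i}|Q_{ij}|<0$, i.e. $\mu_\infty(Q)<0$. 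The essential leverage is the freedom in choosing $Q$ off the $n$-dimensional range of $P$ — available precisely because $m\ge n$ — which is what allows the induced/logarithmic norm of $Q$ to be pushed strictly below the stability threshold even when a square representation would fail; proving that a contractive polytope with the requisite facet structure always exists is the crux, and the step on which I would spend the most care.
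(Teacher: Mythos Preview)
Your sufficiency argument is correct and is essentially what the paper cites from \cite{polanski1995infinity,molchanov1986liapunov}. For necessity, however, you take a genuinely different route. The paper does not argue via contractive polytopes and polar duality; it simply invokes, and recaps in its Appendix, the explicit real-Jordan-form constructions of \cite[Section~3]{polanski1995infinity} (CT) and \cite[Section~4]{christophersen2007further} (DT): each real eigenvalue contributes a scaled scalar Jordan block, and each complex pair $\sigma_i\pm j\omega_i$ contributes a hand-built $P_{m_i}\in\mathbb{R}^{m_i\times 2}$ of trigonometric rows together with an $m_i\times m_i$ block $Q_{m_i}$, with $m_i\ge 2$ chosen via \eqref{eq:m_i} just large enough to force $\mu_\infty(Q_{m_i})<0$ or $\|Q_{m_i}\|_\infty<1$. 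Your geometric path (iterated images $\to$ facet normals $\to$ polar duality/Farkas) is also sound --- in particular the Euler reduction does deliver $\mu_\infty(Q)<0$ from $\|Q_\tau\|_\infty<1$, since $Q=\tau^{-1}(Q_\tau-I)$ has $i$-th logarithmic-norm row sum $\tau^{-1}\big((Q_\tau)_{ii}-1+\sum_{j\ne i}|(Q_\tau)_{ij}|\big)<0$ --- but it yields only existence, not the explicit $P$, $Q$ and the dimension count $m=n_r+\sum_i m_i$ that the paper actually uses downstream to implement the observer and to quantify the lifting. Conversely, your argument makes the mechanism (the freedom in $Q$ off the $n$-dimensional range of $P$) more transparent and would extend more readily to, e.g., polytopic linear differential inclusions, where a Jordan-form construction is unavailable.
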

\vspace{-0.35cm}

\begin{proof}
The sufficiency for CT and DT systems was proven by various authors, e.g., \cite{polanski1995infinity,molchanov1986liapunov}, while the necessity is proven by the explicit construction of real $P$ and $Q$ in \cite[Section 3]{polanski1995infinity} and \cite[Section 4]{christophersen2007further}  for CT and DT systems.
\end{proof}\vspace{-0.15cm}

Specifically, we leverage the real $P\in \mathbb{R}^{m \times n}$ and $Q\in\mathbb{R}^{m \times m}$  matrices corresponding to $A_{cl}$ in \eqref{eq:dynamics_cl} from the above theorem to propose the new transformed coordinate as
\begin{align}
    z_t = P x_t,
\end{align}
where since $P$ has full (column) rank by construction, $x_t$ can be recovered using the Moore-Penrose pseudoinverse of $P$, i.e., $ P^\dagger z_t = P^\dagger P x_t = x_t$. Then, we can obtain the transformed CT or DT system corresponding to \eqref{eq:dynamics_cl}:
\begin{align}
\begin{array}{rl}
z_{t}^{+} \hspace{-0.1cm}=\! Qz_{t} \hspace{-0.055cm}+\hspace{-0.055cm} PLy_t \hspace{-0.055cm}+\hspace{-0.055cm} PWw_t \hspace{-0.055cm}-\hspace{-0.055cm} PLVv_t \hspace{-0.055cm}+\hspace{-0.055cm} P(B 
\hspace{-0.055cm}-\hspace{-0.055cm} LD)u_t,
\label{eq:lifted_dynamics}
\end{array}
\end{align}
where we applied $PA_{cl}=QP$ from Theorem \ref{thm:stability}.

Note that unlike time-invariant or time-varying coordinate transformations for interval observers in the literature, e.g., \cite{chambon2016overview,mazenc2011interval,mazenc2014interval}, the above transformation does not attempt to find $Q$ that is Metzler for CT systems or non-negative for DT systems. In fact, the constructions provided in \cite[Section 3]{polanski1995infinity} for CT systems and in \cite[Section 4]{christophersen2007further}  for DT systems result in $Q$ that is \emph{not} Metzler (cf. \eqref{eq:Q_CT}) \emph{nor} non-negative (cf. \eqref{eq:Q_DT}). Moreover, the resulting \emph{stability} properties of dynamical systems with $Q$ as the system matrix, i.e., for $z^+=Qz$, are also enhanced. In particular, it has been shown  in \cite[Lemma 1c]{Torsten1975OnLogarithmic} that $\alpha(Q) \le \mu_\infty(Q)$ and in \cite[Lemma 5.6.10]{horn2012matrix} that $\rho(Q) \le \|Q\|_\infty $, where $\alpha(Q) = \max \{\textnormal{Re} (\lambda) \mid \lambda \in \Lambda(Q)\}$ and $\rho(Q) = \max \{|\lambda| \mid \lambda \in \Lambda(Q)\}$ are the  spectral abscissa and spectral radius of $Q$, respectively, with $\Lambda(Q)$ being the spectrum (set of eigenvalues) of $Q$. Hence, $\mu_\infty(Q) < 0$ and $\|Q\|_\infty< 1$ are ``stronger" properties for asymptotic stability of the $z^+=Q z$ system in CT and DT, respectively, than the more conventional eigenvalue-based properties of $\alpha(Q)<0$ (CT) and $\rho(Q)<1$ (DT).

Further, it is worth noting that the dimension of $z_t$ is $m \ge n$, i.e., our proposed transformation may be non-square leading to more (lifted) states. Specifically, for an $A_{cl}$ with $n_r$ real eigenvalues and $n_c=\frac{1}{2}(n-n_r)$ complex conjugate eigenvalue pairs of the form $\sigma_i \pm j \omega_i, i=1,\hdots,n_c$,  the constructions of $P$ and $Q$ in \cite[Section 3]{polanski1995infinity} for CT systems and in \cite[Section 4]{christophersen2007further}  for DT systems lead to $m = n_r + \sum_{i=1}^{n_c} m_i$, with  
$m_{i}$ being the smallest constant $c_i$ (with $c_i \ge 2$) for each complex conjugate eigenvalue pair $\sigma_i \pm j \omega_i$ such that:
\begin{subequations}\label{eq:m_i}
\begin{gather}
\frac{\sigma_i}{\omega_i} < \frac{\cos(\frac{\pi}{c_i}) - 1}{\sin(\frac{\pi}{c_i})} \quad \textnormal{(CT)},\\
\sqrt{\sigma_i^2 + \omega_i^2} < \frac{\cos(\frac{\pi}{2c_i})}{\cos(\psi_i - \frac{2\pi}{c_i}(2\lfloor \frac{\psi_i c_i}{\pi} \rfloor+1 )} \quad \textnormal{(DT)},
\end{gather} 
\end{subequations}
where $\lfloor \cdot \rfloor$ is the floor operator and 
$\psi_i = \begin{cases} 
\frac{\pi}{2}, & \text{if } \sigma_i = 0, \\
\tan^{-1} \frac{|\omega_i|}{|\sigma_i|}, & \text{otherwise}.
\end{cases}$ 
However, note that we can also choose any $m_{i}$ that is larger than the minimum $c_i$ and in fact, it generally leads to smaller volumes of the set estimates but at the cost of having more states. 
Further, from \eqref{eq:m_i}, we observe that $P \in \mathbb{R}^{n \times n}$ is square and time-invariant, i.e., $m=n$, if for all $i=1,\hdots,n_c$,
\begin{subequations}
\begin{gather}
\sigma_i < -\omega_i  \quad \textnormal{(CT)},\\
\sqrt{\sigma_i^2 \!+\! \omega_i^2}(\cos(\psi_i \!-\! \pi(2\lfloor \frac{2 \psi_i }{\pi} \rfloor\!+\!1 )) < \cos(\frac{\pi}{4}) \ \textnormal{(DT)},
\end{gather} 
\end{subequations}
which, on its own, contributes to answering the existence question of an interval observer with a (square) time-invariant similarity transformation. As a corollary, the above conditions imply that time-invariant polytopic and interval observers always exist for any detectable systems whose (closed-loop) system matrix $A_{cl}$ only have real eigenvalues.

Finally, while the time-varying approaches in \cite{mazenc2011interval,mazenc2014interval,meslem_Hinf_2020,meslem2021reachability} appear to only have $n$ states, if the time-varying matrices are also treated as states, then they can be viewed as having $n^2+n$ states, whereas the proposed time-invariant transformed system only has $m$ states, where we often have $n \le m < n^2+n$.

For the sake of readability and completeness, the  construction methods of $P$ and $Q$ for CT and DT systems from \cite[Section 3]{polanski1995infinity}  and  \cite[Section 4]{christophersen2007further} that rely on real Jordan canonical forms are recapped in the Appendix.

\subsection{Polytopic and Interval Observer Design}\label{sec:obsv}
Next, building upon the transformed dynamics in \eqref{eq:lifted_dynamics} using the transformation matrices $P$ and $Q$ computed by applying Theorem \ref{thm:transformation} to $\mathcal{G}_{eq}$ in  \eqref{eq:dynamics_cl} with a Hurwitz/Schur stable $A_{cl}$, we introduce our correct and stable polytopic and interval observers that are constructed as  the mixed-monotone embedding system in Definition \ref{def:embedding} for \eqref{eq:lifted_dynamics}:
\begin{align}\label{eq:observer}\small
\hat{\mathcal{G}}:
\begin{cases}\hspace{-0.1cm}
\begin{array}{rl}
\underline{z}^+_{t}&=Q^\uparrow\underline{z}_{t}\!-\!Q^\downarrow\overline{z}_{t}\!+\!(PW)^{\oplus}\underline{w}\!-\!(PW)^{\ominus}\overline{w}\\
&\,+ (PLV)^{\ominus}\underline{v}\!-\! (PLV)^{\oplus}\overline{v}\!+\!PL y_{t}\!+\! P(B\!-\!LD)u_t,\\
\overline{z}^+_{t}&=Q^\uparrow_x\overline{z}_{t}\!-\!Q^\downarrow\underline{z}_{t}\!+\!(PW)^{\oplus}\overline{w}\!-\!(PW)^{\ominus}\underline{w}\\
&\,+(PLV)^{\ominus}\overline{v}\!-\!(PLV)^{\oplus}\underline{v}\!+\!PL y_{t}\!+\! P(B\!-\!LD) u_t,
\end{array}
\end{cases}\hspace{-0.3cm}
\end{align}
where $Q^\uparrow \triangleq Q^\text{d}+Q^{\text{nd},{\oplus}}$, $Q^\downarrow \triangleq Q^{\text{nd},{\ominus}}$, $\underline{z}^+_{t}=\dot{\underline{z}}_{t}$ and $\overline{z}^+_{t}=\dot{\overline{z}}_{t}$ if \eqref{eq:lifted_dynamics} is a linear CT system; and $Q^\uparrow \triangleq Q^{\oplus}$, $Q^\downarrow \triangleq Q^{\ominus}$, $\underline{z}^+_{t}=\underline{z}_{t+1}$ and $\overline{z}^+_{t}=\overline{z}_{t+1}$ if \eqref{eq:lifted_dynamics} is a linear DT system.

By the framer property of mixed-monotone embedding systems \cite[Proposition 3]{khajenejad2023tight}, the interval estimates $\overline{z}_t$ and $\underline{z}_t$ above frame the true transformed state $z_t=Px_t$, i.e., $\underline{z}_t\le z_t=Px_t\le\overline{z}_t$ for all $t\in \mathbb{T}$. Moreover, we can bound the output equation in \eqref{eq:dynamics_cl}, i.e., $Cx_t=y_t-Du_t-Vv_t$, by applying \cite[Lemma 1]{efimov2013interval} to obtain
\begin{gather*}
    y_t-Du_t -V^\oplus \overline{v} +V^\ominus \underline{v}
 \le C x_t \le y_t-Du_t - V^\oplus \underline{v} + V^\ominus \overline{v}.
\end{gather*}
Combining the above, our polytopic estimate is given by \begin{align}\label{eq:polytope}
\small\begin{array}{r}
    \hat{X}_t^P = \left\{ x \in \mathcal{X} \mid \begin{bmatrix}
        P\\ -P \\ C \\-C
    \end{bmatrix}x \!\leq\! \begin{bmatrix}\overline{z}_t\\ -\underline{z}_t \\ y_t\!-\!Du_t \!-\!V^\oplus \underline{v} \!+\!V^\ominus \overline{v}
    \\ -y_t\!+\!Du_t \!+\!V^\oplus \overline{v} \!-\!V^\ominus \underline{v}\end{bmatrix} \right\}.
    \end{array}
\end{align}

Further, if interval estimates are desired, we can outer-/over-approximate the polytopic estimate to obtain
\begin{gather} \label{eq:estimatesInt}
    \hat{X}_t^I = \{x \in \mathcal{X}\mid \underline{x}_t \le x \le \overline{x}_t\},
\end{gather}
with upper and lower framers given by 
\begin{align}\label{eq:interval}
\begin{array}{rl}\small
\begin{bmatrix}
    \overline{x}_t\\ \underline{x}_t
\end{bmatrix} = & \begin{bmatrix}
 (\tilde{P}^{\dagger})^\oplus \\  -(\tilde{P}^{\dagger})^\ominus   
\end{bmatrix}\begin{bmatrix}
    \overline{z}_t\\ y_t-Du_t - V^\oplus \underline{v} + V^\ominus \overline{v}
\end{bmatrix}\\
&+\begin{bmatrix}
 -(\tilde{P}^{\dagger})^\ominus \\  (\tilde{P}^{\dagger})^\oplus   
\end{bmatrix}\begin{bmatrix}
    \overline{z}_t\\ y_t-Du_t -V^\oplus \overline{v} +V^\ominus
\end{bmatrix},
    \end{array}
\end{align}
where $\tilde{P}\triangleq \begin{bmatrix}
    P^\top & C^\top
\end{bmatrix}^\top$ has full column rank by construction of $P$ in Theorem \ref{thm:transformation} and $\tilde{P}^\dagger$ is its Moore-Penrose pseudoinverse satisfying $\tilde{P}^\dagger\tilde{P}=I$. Hence, the above framers can be derived by applying \cite[Lemma 1]{efimov2013interval} to $x_t=\tilde{P}^\dagger\tilde{P}x_t$ with the inequalities for $\tilde{P}x_t$ given in \eqref{eq:polytope}.

From the above derivation with the application of \cite[Proposition 3]{khajenejad2023tight} and \cite[Lemma 1]{efimov2013interval}, the  correctness property of our proposed polytopic and interval estimates (solving Problem \ref{prob:PIO}(i)) follows immediately, as summarized below.
\vspace{-0.2cm}

\begin{thm}[{
Correctness}]\label{lem:correctness}
Given a detectable linear CT or DT system $\mathcal{G}$ and its equivalent closed-loop system in \eqref{eq:dynamics_cl} with any $L$ that stabilizes $A_{cl}=A-LC$, 
the polytopic estimate $\hat{X}_t^P$ in \eqref{eq:polytope} and the interval estimate in \eqref{eq:estimatesInt}--\eqref{eq:interval}, obtained via the mixed-monotone embedding system  $\hat{\mathcal{G}}$ in \eqref{eq:observer} is a \emph{correct} enclosure of the true state $x_t$, i.e.,  $x_t \in \hat{X}_t^P \subseteq \hat{X}^I_t$ for all $t\in \mathbb{T}$ and for all $w_t \in \mathcal{W}$, $v_t \in \mathcal{V}$.
\end{thm}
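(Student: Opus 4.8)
The plan is to derive the correctness claim by assembling the two framing facts already used in the construction, splitting the statement into two independent sub-claims: the polytopic enclosure $x_t \in \hat{X}_t^P$, and the interval over-approximation $\hat{X}_t^P \subseteq \hat{X}_t^I$. Neither requires a new estimate; the work is in matching the observer equations to the cited results.

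For the polytopic enclosure, I would first verify that $\hat{\mathcal{G}}$ in \eqref{eq:observer} is precisely the mixed-monotone embedding system of Definition \ref{def:embedding} applied to the transformed dynamics \eqref{eq:lifted_dynamics}. The transformed system is affine in the two bounded uncertainties $w_t \in \mathcal{W}$ and $v_t \in \mathcal{V}$, with system matrix $Q$, disturbance matrix $PW$ acting on $w_t$, and a second effective disturbance matrix $-PLV$ acting on $v_t$, while $PLy_t + P(B-LD)u_t$ is known and deterministic. The key bookkeeping step is to confirm the $-PLVv_t$ term: writing its coefficient as $-PLV$ and using $(-PLV)^\oplus = (PLV)^\ominus$ and $(-PLV)^\ominus = (PLV)^\oplus$ reproduces exactly the coefficients $(PLV)^\ominus$ and $(PLV)^\oplus$ in \eqref{eq:observer}, while the deterministic terms enter both framers identically and hence cannot break framing. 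With this identification, the framer property of \cite[Proposition 3]{khajenejad2023tight} yields $\underline{z}_t \le z_t = Px_t \le \overline{z}_t$ for all $t \in \mathbb{T}$ and all admissible $w_t, v_t$. Separately, from the output relation $Cx_t = y_t - Du_t - Vv_t$ in \eqref{eq:dynamics_cl}, applying \cite[Lemma 1]{efimov2013interval} to $Vv_t$ with $v_t \in [\underline{v},\overline{v}]$ gives the two-sided bounds on $Cx_t$ displayed before \eqref{eq:polytope}. Stacking the framer inequalities (as $Px_t \le \overline{z}_t$ and $-Px_t \le -\underline{z}_t$) with these measurement bounds reproduces the four block-rows defining $\hat{X}_t^P$, so $x_t \in \hat{X}_t^P$.

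For the over-approximation, I would use that $\tilde{P} = \begin{bmatrix} P^\top & C^\top \end{bmatrix}^\top$ inherits full column rank from $P$, so $\tilde{P}^\dagger \tilde{P} = I$ and hence $x_t = \tilde{P}^\dagger(\tilde{P}x_t)$ for every $x_t$ in the polytope. The polytope constraints in \eqref{eq:polytope} furnish a two-sided interval bound on the stacked vector $\tilde{P}x_t$, whose lower (resp. upper) framer is $\underline{z}_t$ (resp. $\overline{z}_t$) stacked with the measurement lower (resp. upper) bound. Applying \cite[Lemma 1]{efimov2013interval} a second time, now to the linear map $\tilde{P}^\dagger$ acting on the interval-bounded vector $\tilde{P}x_t$, produces exactly the element-wise framers $\underline{x}_t \le x_t \le \overline{x}_t$ written out in \eqref{eq:interval}. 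Consequently every $x \in \hat{X}_t^P$ satisfies $\underline{x}_t \le x \le \overline{x}_t$, giving $\hat{X}_t^P \subseteq \hat{X}_t^I$ and completing the chain $x_t \in \hat{X}_t^P \subseteq \hat{X}_t^I$.

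The main obstacle I anticipate is not any hard inequality but the careful sign bookkeeping and the extension of Definition \ref{def:embedding}, which is stated for a single disturbance $Ww$, to the present setting with two uncertainty sources and deterministic input/measurement terms. Once the embedding coefficients in \eqref{eq:observer} are matched to the signed decomposition described above and the deterministic terms are seen to shift both framers equally, the correctness property follows immediately from the cited framer property of \cite[Proposition 3]{khajenejad2023tight} and the interval-arithmetic bound of \cite[Lemma 1]{efimov2013interval}.
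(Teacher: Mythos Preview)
Your proposal is correct and follows essentially the same approach as the paper: invoke the framer property of \cite[Proposition~3]{khajenejad2023tight} for the embedding system \eqref{eq:observer} to get $\underline{z}_t \le Px_t \le \overline{z}_t$, use \cite[Lemma~1]{efimov2013interval} on the output equation to bound $Cx_t$, stack these into \eqref{eq:polytope}, and then apply \cite[Lemma~1]{efimov2013interval} once more to $x_t=\tilde{P}^\dagger(\tilde{P}x_t)$ to obtain the interval over-approximation. Your additional bookkeeping on the sign decomposition of $-PLV$ and the handling of two uncertainty channels plus deterministic terms is sound and merely makes explicit what the paper leaves implicit in the derivation preceding the theorem.
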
\vspace{-0.2cm}

Next, we prove that the proposed observers solve Problem \ref{prob:PIO}(ii), i.e., their set volumes satisfies the input-to-state stability (ISS) property in \eqref{eq:ISS}. \vspace{-0.2cm}

\begin{thm}[Input-to-State Stability]\label{thm:stability}
Consider a detectable linear CT or DT system $\mathcal{G}$ in \eqref{eq:dynamics}, any $L$ that stabilizes $A_{cl}=A-LC$, and the resulting equivalent closed-loop system in \eqref{eq:dynamics_cl}. The proposed polytopic/interval observer $\hat{\mathcal{G}}$ in \eqref{eq:observer} with polytopic and interval estimates in \eqref{eq:polytope}--\eqref{eq:interval} is input-to-state stable (ISS) with respect to their volumes $\hat{X}_t \in \{\hat{X}_t^P,\hat{X}_t^I\}$, i.e., there exist class $\mathcal{KL}$ and $\mathcal{K}$ functions $\beta$ and $\gamma$ such that $\forall t \in \mathbb{T}$, \eqref{eq:ISS} holds.
\end{thm}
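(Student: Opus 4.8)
The plan is to reduce the ISS of the set volumes to an ISS bound on the \emph{interval width} in the transformed coordinates, $\delta_t \triangleq \overline{z}_t - \underline{z}_t \ge 0$, and to then push that bound through the constant pseudoinverse map to the original coordinates. First I would subtract the two rows of the embedding system $\hat{\mathcal{G}}$ in \eqref{eq:observer}: the common terms $PLy_t$ and $P(B-LD)u_t$ cancel, and the noise contributions collapse via $M^\oplus + M^\ominus = |M|$, leaving the width dynamics
\begin{align*}
\delta_t^+ = (Q^\uparrow + Q^\downarrow)\delta_t + d, \qquad d \triangleq |PW|(\overline w - \underline w) + |PLV|(\overline v - \underline v) \ge 0,
\end{align*}
which is autonomous in $\delta_t$ with a constant non-negative forcing term. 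The key algebraic step is that $Q^\uparrow + Q^\downarrow = Q^\textnormal{d} + |Q^\textnormal{nd}| = Q^\textnormal{m}$ in CT and $Q^\uparrow + Q^\downarrow = Q^\oplus + Q^\ominus = |Q|$ in DT, so $\delta_t$ evolves under a positive (Metzler/non-negative) system, and by the framer property $\delta_t \ge 0$ for all $t$.

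The crux is establishing an ISS bound on $\|\delta_t\|_\infty$, and this is exactly where Theorem \ref{thm:transformation} and Lemma \ref{lem:normProp} must be combined: the width is governed by $Q^\textnormal{m}$ (resp. $|Q|$), whose spectrum may exceed that of $Q$, so an eigenvalue (Hurwitz/Schur) argument on $Q$ does \emph{not} suffice; however, Lemma \ref{lem:normProp} gives $\mu_\infty(Q^\textnormal{m}) = \mu_\infty(Q) < 0$ and $\||Q|\|_\infty = \|Q\|_\infty < 1$, transferring the norm conditions of Theorem \ref{thm:transformation} directly onto the width dynamics. Applying the logarithmic-norm estimate $D^+\|\delta_t\|_\infty \le \mu_\infty(Q^\textnormal{m})\|\delta_t\|_\infty + \|d\|_\infty$ with the comparison lemma (CT), and iterating $\|\delta_{t+1}\|_\infty \le \||Q|\|_\infty\|\delta_t\|_\infty + \|d\|_\infty$ (DT), yields $\|\delta_t\|_\infty \le \beta_1(\|\delta_0\|_\infty, t) + \gamma_1(\|d\|_\infty)$ with $\beta_1 \in \mathcal{KL}$ (exponential $e^{\mu_\infty(Q)t}$ or geometric $\|Q\|_\infty^t$ decay) and $\gamma_1 \in \mathcal{K}$ (linear gain $1/|\mu_\infty(Q)|$ or $1/(1-\|Q\|_\infty)$).

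It then remains to convert this width bound into a volume bound. Since $\hat X_t^P \subseteq \hat X_t^I$, it suffices to bound $\textnormal{Vol}(\hat X_t^I) = \prod_{i=1}^m (\overline x_{t,i} - \underline x_{t,i})$; from the construction in \eqref{eq:interval}, $\overline x_t - \underline x_t = |\tilde P^\dagger|\,[\delta_t^\top\ (\delta^v)^\top]^\top$ with $\delta^v \triangleq |V|(\overline v - \underline v)$ constant, so each factor is at most $\||\tilde P^\dagger|\|_\infty(\|\delta_t\|_\infty + \|\delta^v\|_\infty)$. Taking the product of the $m$ factors and expanding $(a+b+c)^m \le 3^{m-1}(a^m + b^m + c^m)$ separates the bound into a $\mathcal{KL}$ term $\propto \beta_1(\|\delta_0\|_\infty, t)^m$ (which remains class $\mathcal{KL}$) and a $\mathcal{K}$ term collecting $\gamma_1(\|d\|_\infty)^m$ and $\|\delta^v\|_\infty^m$; composing with $\|\delta_0\|_\infty \le \||P|\|_\infty\|\overline x_0 - \underline x_0\|_\infty$ delivers the required $\beta$ and $\gamma$ in \eqref{eq:ISS}. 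I expect the main obstacle to be the volume--norm mismatch: the contraction is intrinsically in the $\infty$-norm of the width, whereas \eqref{eq:ISS} is stated in volumes, so one must either treat $\textnormal{Vol}(\mathcal{X}_0)$ as a monotone surrogate for the width norm (valid for non-degenerate boxes) or, exploiting the normalization $\overline w = -\underline w = \mathbf{1}$, $\overline v = -\underline v = \mathbf{1}$ that makes $\textnormal{Vol}(\mathcal{W})$ and $\textnormal{Vol}(\mathcal{V})$ constant, absorb the noise-dependent terms into the class-$\mathcal{K}$ gain $\gamma$.
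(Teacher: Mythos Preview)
Your approach is essentially identical to the paper's: derive the width dynamics $\delta_t^+ = \tilde Q\,\delta_t + d$ with $\tilde Q = Q^\textnormal{m}$ (CT) or $\tilde Q = |Q|$ (DT), invoke Lemma~\ref{lem:normProp} to transfer $\mu_\infty(Q)<0$ / $\|Q\|_\infty<1$ to $\tilde Q$, and obtain an exponential/geometric ISS bound on $\|\delta_t\|_\infty$. The paper then closes by asserting that the volumes of $\hat X_t^P$ and $\hat X_t^I$ are monotone in $\|\varepsilon_t\|_\infty$, whereas you work out the width-to-volume conversion explicitly through $|\tilde P^\dagger|$; your treatment of that last step is in fact more detailed than the paper's (minor slip: the product defining $\textnormal{Vol}(\hat X_t^I)$ runs over $n$, not $m$, since $\hat X_t^I \subset \mathbb{R}^n$).
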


\begin{proof}
First, we define the framer error $\varepsilon_t\triangleq\overline{z}_{t}-\underline{z}_{t}$ for \eqref{eq:observer} and write its resulting error system as:
\begin{align}\label{eq:observer_error}
\tilde{\mathcal{G}}:
\begin{array}{rl}
\varepsilon^+_{t}&=\tilde{Q}\varepsilon_{t}\!+\!f_\varepsilon,
\end{array}
\end{align}
with $\tilde{Q}= Q^d+|Q^{nd}|=Q^m$ and $\varepsilon^+_{t}=\dot{\varepsilon}_{t}$ if \eqref{eq:lifted_dynamics} is a CT system, and $\tilde{Q}= |Q|$ and $\varepsilon^+_{t}=\varepsilon_{t+1}$ if \eqref{eq:lifted_dynamics} is a DT system, while $f_\varepsilon\triangleq |PW|\Delta{w}\!+ |PLV|\Delta{v}$, $\Delta{w}\triangleq\overline{w}\!-\!\underline{w}$ and $\Delta{v}\triangleq\overline{v}\!-\!\underline{v}$.

    By Theorem \ref{thm:transformation}, the  $Q$ matrices for CT and DT systems satisfy $\mu_\infty(Q)<0$ and $\|Q\|_\infty <1$, respectively. And further, by Lemma \eqref{lem:normProp}, $\mu_{\infty}(Q^m)<0$ and $\||Q|\|_{\infty}<1$. 
    
    Then, for CT systems, by applying \cite[Lemma 2]{Torsten1975OnLogarithmic} to \eqref{eq:observer_error}, we obtain 
    $\frac{d}{dt}(\norm{\varepsilon_t}_\infty)\le \mu_\infty({Q}^m)\norm{\varepsilon_t}_\infty + \norm{f_\varepsilon}_\infty$, whose solution satisfies
    \begin{align*} \norm{\varepsilon_t}_\infty &\textstyle\le e^{\mu_\infty({Q}^m)t}\norm{\varepsilon_0}_\infty+\int_0^t e^{\mu_\infty({Q}^m)(t-s)}ds\norm{f_\varepsilon}_\infty\\
        &\textstyle = e^{\mu_\infty({Q}^m)t}\norm{\varepsilon_0}_\infty+\frac{e^{\mu_\infty({Q}^m)t}-1}{\mu_\infty({Q}^m)}\norm{f_\varepsilon}_\infty.
         \end{align*}
         Since $\mu_{\infty}(Q^m)<0$, the CT framer error system is ISS.

         On the other hand, the  solution of the DT framer error system in \eqref{eq:observer_error} is given by $\varepsilon_t = |Q|^t\varepsilon_0 +\sum_{i=0}^{t-1-i}|Q|^i f_\varepsilon$, whose bounds can be found using triangle inequality: \begin{align*}
        \|\varepsilon_t\|_\infty &\textstyle \le \||Q|\|_\infty^t \|\varepsilon_0\|_\infty +\sum_{i=0}^{t-1}\||Q|\|_\infty^{i}\|f_\varepsilon\|_\infty\\
        &\textstyle=\||Q|\|_\infty^t \|\varepsilon_0\|_\infty + \frac{1-\||Q|\|_\infty^{t}}{1-\||Q|\|_\infty}\|f_\varepsilon\|_\infty.
         \end{align*}
Since $\||Q|\|_\infty<1$, the DT framer error system is also ISS. Finally, since  volumes of $\hat{X}^P_t$ and $\hat{X}^I_t$ are monotone with respect to $\|\varepsilon_t\|_\infty$, the theorem holds. 
\end{proof}

Importantly, we have also shown above that the proposed approach via the mixed-monotone embedding system $\hat{\mathcal{G}}$ in \eqref{eq:lifted_dynamics} does not impose any conservatism with respect to the ISS property, due to the ``enhanced'' stability properties of $Q$ discussed in Section \ref{sec:Non-Square}, enabled by Lemma \ref{lem:normProp}, which means that the approach applies to all detectable linear CT and DT systems. Furthermore, this establishes the fact that (lifted) time-invariant polytopic and interval observers always exist for all detectable linear CT and DT systems.

\section{Illustrative Examples}
In this section, the effectiveness of our approaches are demonstrated on three CT and DT systems 
in MATLAB using YALMIP \cite{YALMIP}, Gurobi \cite{gurobi}, and CORA \cite{Althoff2015ARCH} in comparison with several time-varying interval observers in the literature, specifically \cite{mazenc2011interval,meslem_Hinf_2020} (CT) and \cite{mazenc2014interval,meslem2021reachability} (DT). 

\vspace{-0.1cm}
\subsection{CT System Examples}\label{sec:CT_exm}

\begin{figure*}[t!]
    \centering
        \begin{subfigure}[h]{\textwidth}
        \centering
        \includegraphics[width=0.335\columnwidth,trim=3mm 0mm 3mm 0mm]{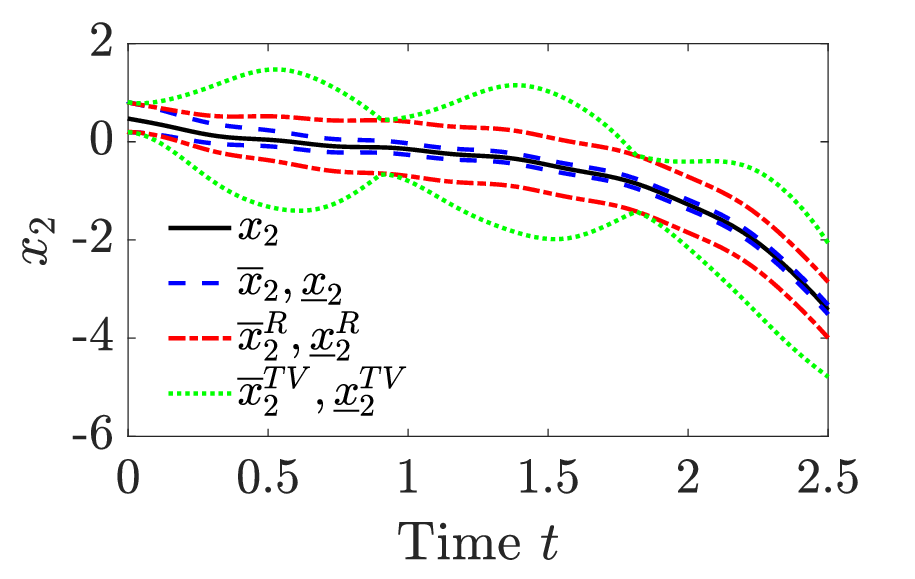}\
        \includegraphics[width=0.28\columnwidth,trim=3mm 3mm 3mm 0mm]{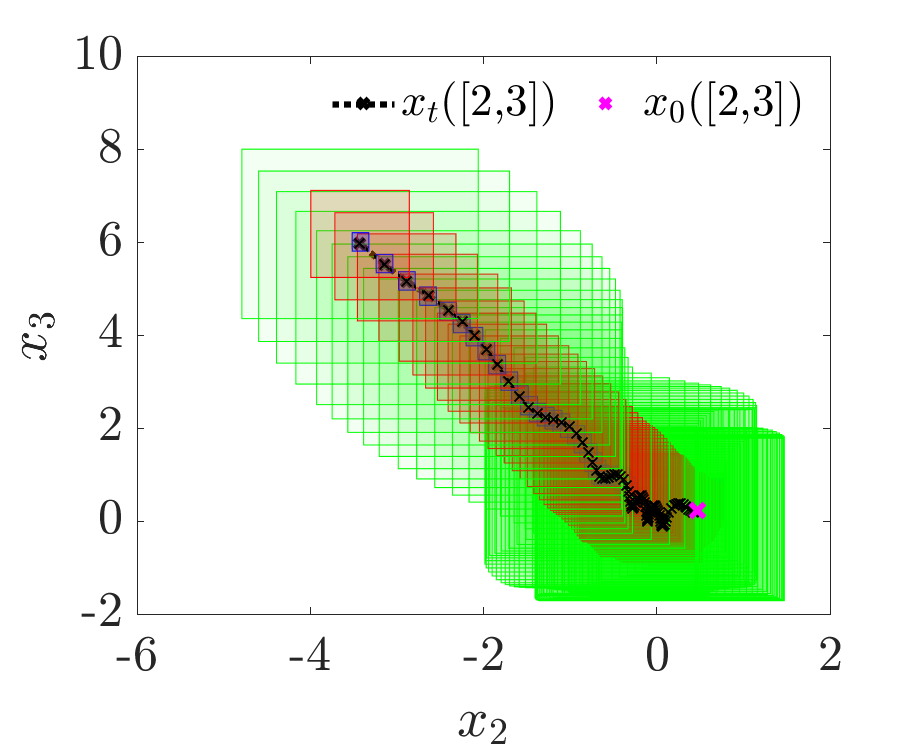}\
        \includegraphics[width=0.33\columnwidth,trim=3mm 3mm 3mm 0mm]{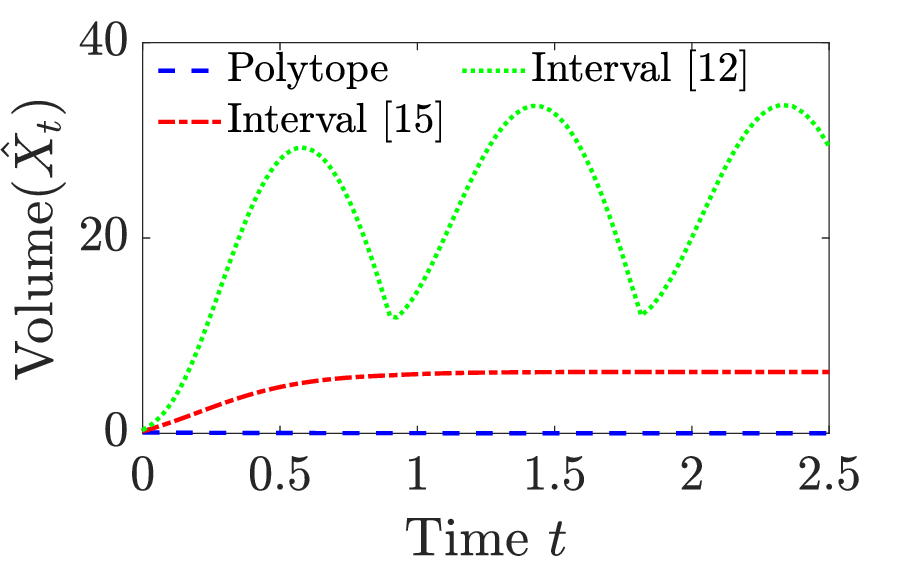}
        \caption{CT System from \cite[Sec. IV.A]{raissi2011interval}}
        \label{subfig:ct_system}
    \end{subfigure}\\
    \begin{subfigure}[h]{\textwidth}
        \centering
        \includegraphics[width=0.335\columnwidth,trim=3mm 0mm 3mm 0mm]{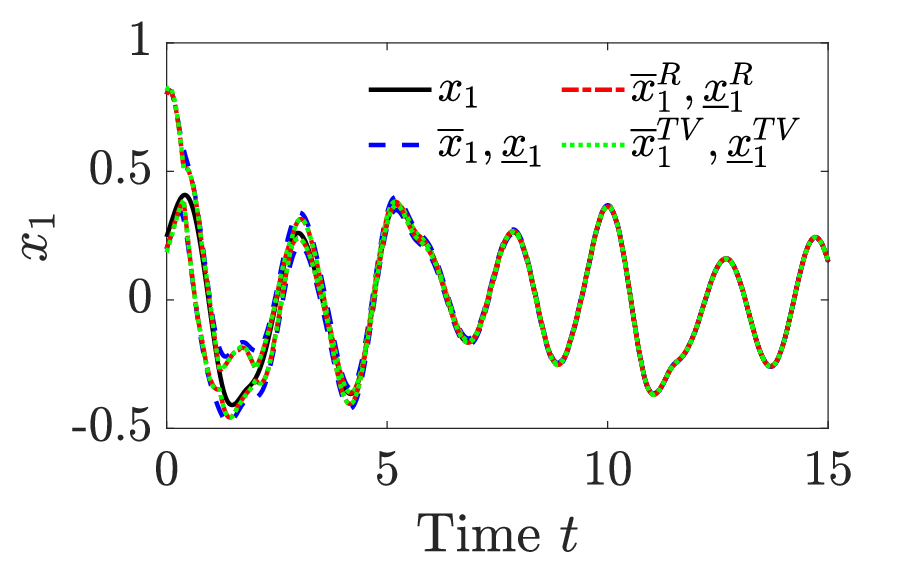}\
        \includegraphics[width=0.28\columnwidth,trim=3mm 3mm 3mm 0mm]{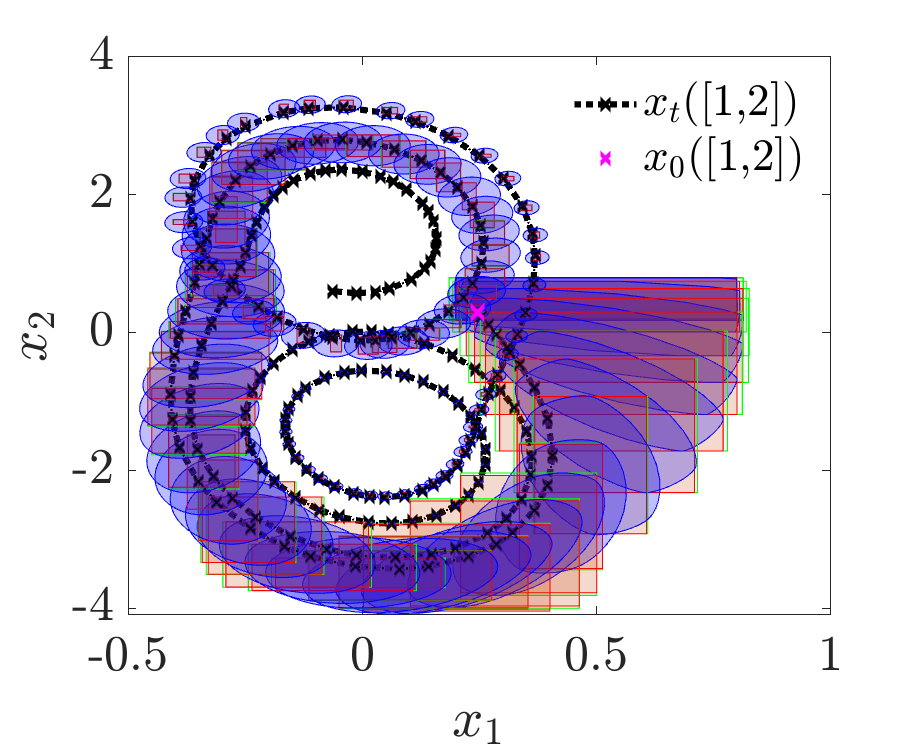}\
        \includegraphics[width=0.33\columnwidth,trim=3mm 3mm 3mm 0mm]{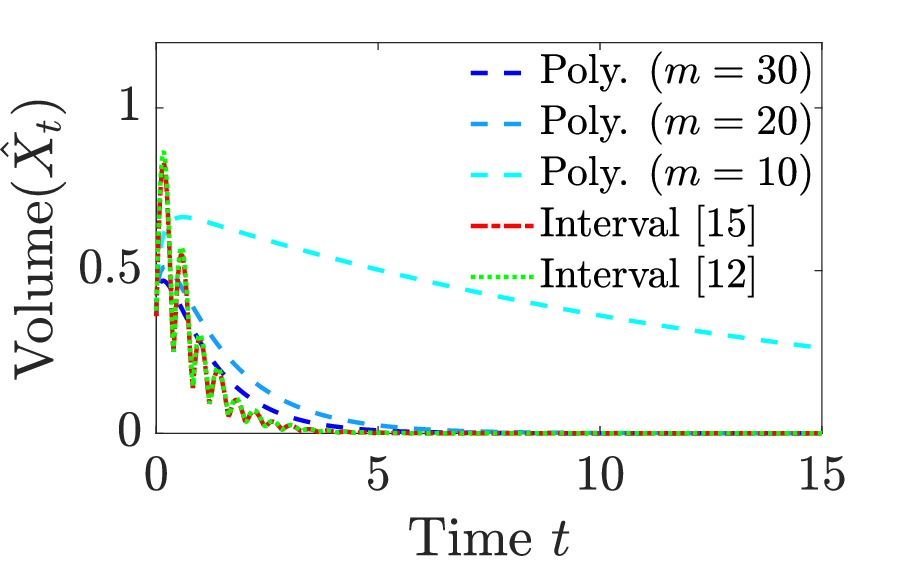}
        \caption{CT System from \cite[Eq. (18)]{Mazenc2010}}
        \label{subfig:ct_system2}
    \end{subfigure}\\
    \begin{subfigure}[h]{\textwidth}
        \centering
    \includegraphics[width=0.335\columnwidth,trim=3mm 0mm 3mm 0mm]{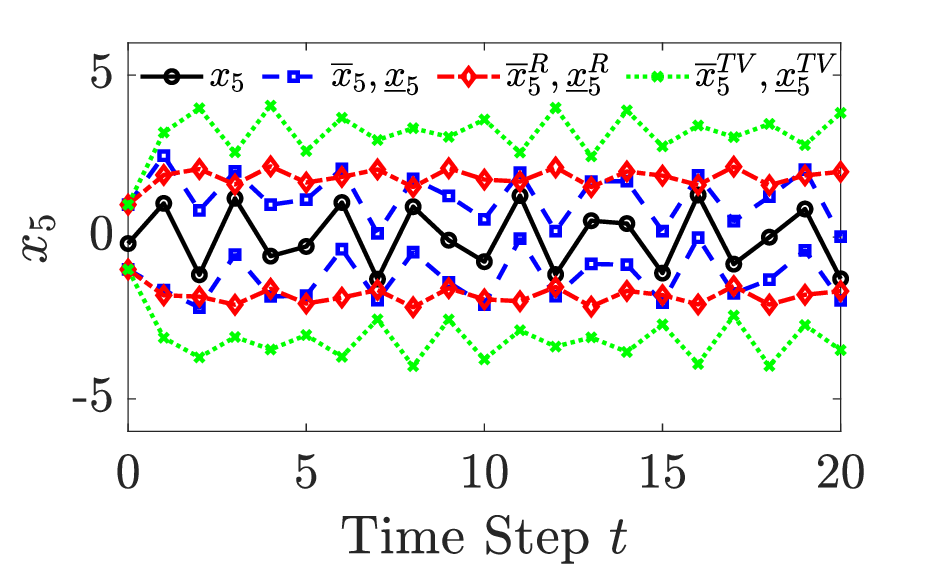}\
    \includegraphics[width=0.28\columnwidth,trim=3mm 3mm 3mm 0mm]{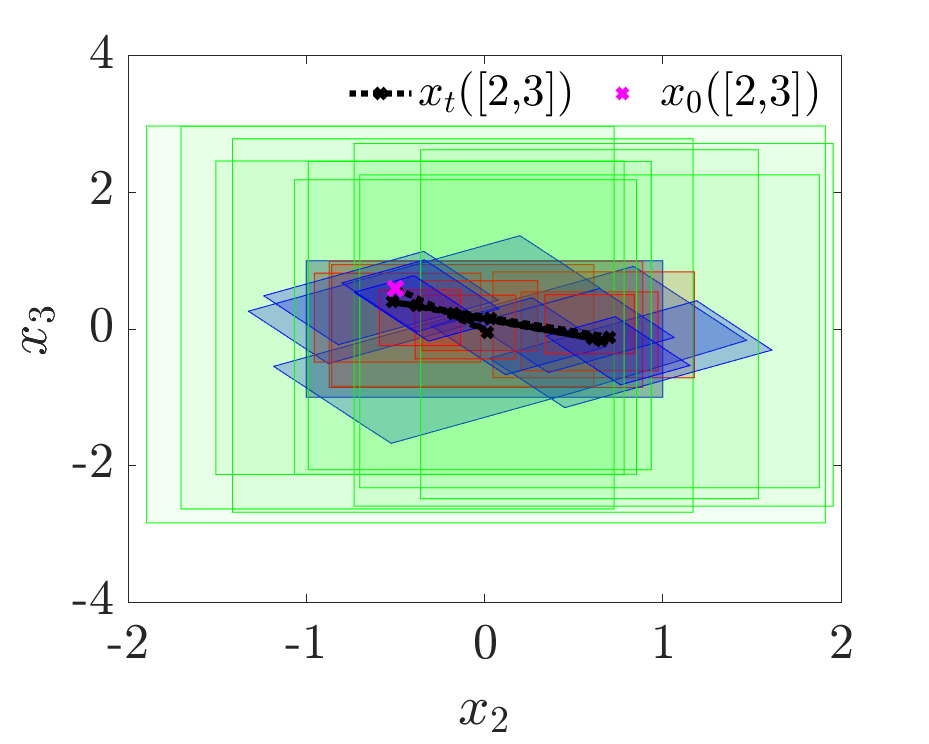}\
    \includegraphics[width=0.33\columnwidth,trim=3mm 3mm 3mm 0mm]{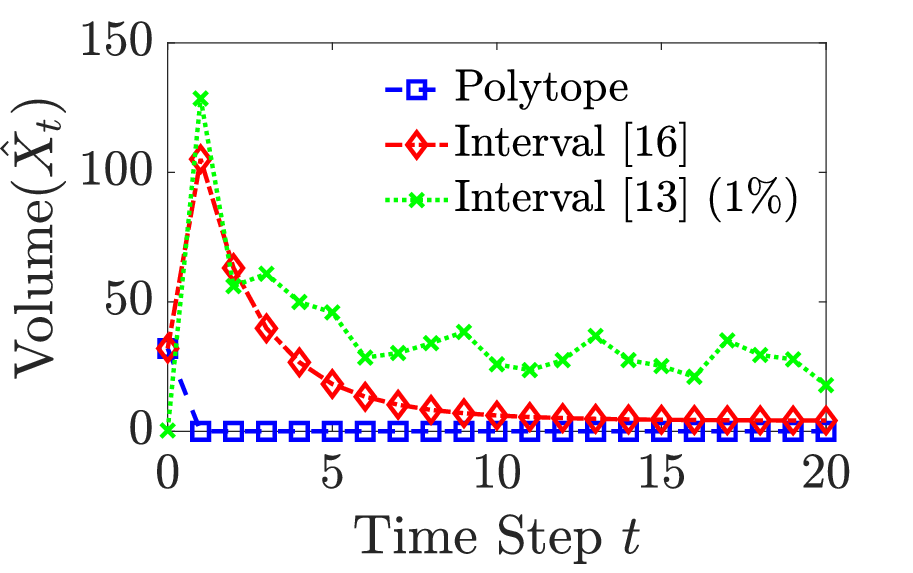}
        \caption{DT System from \cite[Sec. 4.2]{meslem2021reachability}}
    \label{subfig:dt_system}
    \end{subfigure}%
    \caption{Left: Representative state and its upper and lower framers obtained as projections of proposed polytopic estimates, and 
    from time-varying interval observers based on reachability in \cite{meslem_Hinf_2020} (CT) and \cite{meslem2021reachability} (DT), and on time-varying coordinate transformations in \cite{mazenc2011interval} (CT) and \cite{mazenc2014interval} (DT); Middle and right: Proposed polytopic sets in blue,  interval estimates of \cite{meslem_Hinf_2020} (CT) and \cite{meslem2021reachability} (DT) in red and interval estimates of \cite{mazenc2011interval} (CT) and \cite{mazenc2014interval} (DT) in green, and their respective volumes. Note that the (large) interval volumes from the approach in \cite{mazenc2014interval} (DT) are plotted with a 1/100th factor such that they are more comparable with the other methods.}\vspace{-0.6cm}
\end{figure*}

1) First, we consider the CT system in \cite[Sec. IV.A]{raissi2011interval} 
with system matrices  $A =
\begin{bmatrix}
2 & 0 & 0 \\
1 &-4 & \sqrt{3} \\
-1 & -\sqrt{3} & -4 
\end{bmatrix}$, 
$C =
\begin{bmatrix}
1 & 0 & 0
\end{bmatrix}$,  
$W =
\begin{bmatrix}
-10 & 0 & 3.4
\end{bmatrix}^\top$ and with empty $B$, $D$, and $V$. 
The initial state ${x}_0$ is  unknown but bounded within the interval $[0.2, 0.8]\times [0.2, 0.8]\times [0.2, 0.8]$, and the system is subject to an unknown but bounded process noise $w_t$ within $[-1,1]$, simulated using $w_t=\sin(15t)$. The observer gain $L=\begin{bmatrix}
    8.7827 & 0.5239 & -1.8195
\end{bmatrix}^\top$ is designed by using an $\mathcal{H}_\infty$ estimator design and the corresponding $P$ and $Q$ matrices of our proposed observer are found to be $P=\begin{bmatrix}
    -0.289 &      0    &     0 \\
    0.289 &  0  &  1\\
    0.0088 &  -1 &    0
\end{bmatrix}$ and $Q=\begin{bmatrix}
    -6.7827  &       0    &     0 \\
         0  & -4  &  1.732 \\
         0  & -1.732 &   -4
\end{bmatrix}$. Note that this corresponds to a time-invariant similarity transformation (i.e., with $m=n=3$), where $Q$ is notably not Metzler), yet our approach can yield input-to-state stable time-invariant interval and polytopic observers, while all known existing time-invariant interval observers cannot.

Figure \ref{subfig:ct_system} depicts the results from our polytopic observer, as well as those from time-varying interval observers based on time-varying coordinate transformation  \cite{mazenc2011interval} and reachability analysis \cite{meslem_Hinf_2020}. The left plot illustrates the time evolution of a representative state $x_2$ (for the sake of brevity), where the true state $x_2$  is depicted with a black solid line while the upper and lower framers from our polytopic observer (projected onto $x_2$ using \eqref{eq:interval}), denoted as $\overline{x}_2$ and $\underline{x}_2$, are in blue dashed lines, and the framers of  the interval-based methods from \cite{meslem_Hinf_2020} are in red dashed-dotted lines ($\overline{x}^R_2$ and $\underline{x}^R_2$), and from \cite{mazenc2011interval} are in green dotted lines ($\overline{x}^{TV}_2$ and $\underline{x}^{TV}_2$). On the other hand, the middle plot depicts the projection of the set estimates of the various methods (our approach in blue, the approach in \cite{meslem_Hinf_2020} in red, and the approach in \cite{mazenc2011interval} in green) onto the ($x_2,x_3$) plane, while the right plot illustrates the time evolution of their volumes.

From the plots, we observe that while all methods return set estimates that enclose the true states (i.e., they are all \emph{correct}), our proposed approach yields the tightest sets with the smallest volumes, highlighting the superior performance of our approach. On the other hand, the reachability-based method in \cite{meslem_Hinf_2020} is generally  better than the time-varying coordinate transformation approach in \cite{mazenc2011interval}.

 2) Moreover, we consider the chaotic Chua system in \cite[Eq. (18)]{Mazenc2010} with $A = \begin{bmatrix}-1 & 1\\ -14.9 & -0.29\end{bmatrix}$, $B=W=\begin{bmatrix}
    1 & 0
\end{bmatrix}^\top$, and empty $C$, $D$ and $V$, as well as $u_t$ being a known input computed using \cite[Eq. (11)]{Mazenc2010} (details omitted for brevity). For this system, it was shown in \cite{Mazenc2010} that it does not admit a time-invariant similarity transformation (with $m=n$). However, since this system is stable (and hence, detectable), our proposed approach with the non-square transformation applies and is able to find the corresponding $P$ and $Q$ matrices that satisfy Theorem \ref{thm:transformation}, albeit with the minimum $m=c_i$ satisfying \eqref{eq:m_i} found to be 10, i.e., we now have a lifted system with more states $z_t$. This also means that we can design stable time-invariant polytopic and interval observers for this system. Nonetheless, from Figure \ref{subfig:ct_system2} (right), we observe that the convergence rates of the set estimate volumes are generally faster with larger $m$, \zhu{
at the cost of marginally higher computational time (CPU times for solving the observer ODEs in MATLAB on an M1 Pro MacBook are 1.3195, 2.2083, and 3.0306 seconds for $m=10, 20, 30$, respectively).}

Then, we compare our proposed approach (with $m=30$) to those from time-varying interval observers in  \cite{mazenc2011interval,meslem_Hinf_2020}, and the results are depicted in Figure \ref{subfig:ct_system2}. From the plots, we observe that all three methods have comparable performances, with the set estimate volumes from our proposed approach being slightly smaller initially but they converge a little slower than the time-varying interval observers that have near identical estimates and volumes.

\vspace{-0.1cm}
 \subsection{DT System Example}\label{sec:DT_exm}
 \vspace{-0.1cm}

Next, we consider the DT system from \cite[Eq. (43)]{meslem2021reachability} 
with: 
\begin{gather*}
    A =
    \begin{bmatrix}
        -0.54 & 0.45 & 0.36 & 0 & 0 \\
        0.63 & 0.45 & 0.18 & 0.36 & 0 \\
        0.09 & 0.45 & 0.27 & 0.09 & 0.18 \\
        0 & 0 & 0.25 & 0.25\sqrt{2} & -0.25\sqrt{2} \\
        0 & 0 & 0 & 0.25\sqrt{2} & -0.25\sqrt{2}
    \end{bmatrix},\\
    C =
    \begin{bmatrix}
        1 & 0 & 0 & 0 & 0 \\
        0 & 0 & 0 & 1 & 0
    \end{bmatrix}, \quad
    W =
    \begin{bmatrix}
        -1 & 0 & 0 & 0 & 1
    \end{bmatrix}^{\top},
\end{gather*} and empty $B$, $D$, and $V$. 
The true initial state is  ${x}_0 = \begin{bmatrix}-0.3 & -0.5 & 0.6 & 0.9 & -0.2\end{bmatrix}^{\top}$, which is unknown, and the initial state set is assumed be a 5-hyperball $[-1,1]^5$. 
The system is subject to an unknown but bounded disturbance $ w_t $ within $[-1,1]$, simulated using a sinusoidal signal $w_t =  \sin(15 t) $. We adopt the observer gain $L$ from \cite[Eq. 49]{meslem2021reachability} that is based on an $\mathcal{H}_\infty$ observer design, and the corresponding $P$ and $Q$ for our proposed observer are: 
\begin{gather*}
L =\begin{bmatrix}
-0.3218 & 0.5486 & 0.0756 & 0.1861 & -0.1631 \\
0.1516 & 0.1922 & 0.0996 & 0.1457 & 0.0113
\end{bmatrix}^\top,\\
\small P =\begin{bmatrix}
    0.0034  &  0.0953  &  0.0557  &  0.0286 &  -0.0001\\
   -0.2835 &  -0.0670  &  0.2871  &   0.3631 &  -0.1784 \\
   -0.2686 &  -0.6393  &  0.7787   & 0.2609  &  0.0932 \\
    0.2801 &  -0.0283 &  -0.3428 &  -0.3917  &  1.1785 \\
    0.9798  & -0.1739 &  -0.6796  &  0.5974 &  -0.0312
\end{bmatrix},\\
\small Q = \textnormal{diag}(0.7288,\begin{bmatrix}
             0.0946  &  0.0347  \\     
       -0.0347 &   0.0946 
\end{bmatrix},\begin{bmatrix}
      -0.2809 &   0.2811\\
            -0.2811 &  -0.2809
\end{bmatrix}),
\end{gather*}
which again corresponds to a time-invariant similarity transformation without state lifting, where notably $Q$ is not non-negative, as was required by almost all time-invariant interval observers. Our proposed polytopic and interval observer, however, does not need a non-negative $Q$ and can still yield input-to-state set-valued observers.

Figure \ref{subfig:dt_system} compares the performance of our polytopic observer with two time-varying interval observers based on time-varying transformations in \cite{mazenc2014interval} and reachability analysis \cite{meslem2021reachability}. Analogous to the CT case, the left plot illustrates the time evolution of a representative state $x_5$ (in black solid lines) and the upper and lower framers from our proposed method  ($\overline{x}_5$, $\underline{x}_5$ in blue dashed lines), the approach in \cite{meslem2021reachability} ($\overline{x}^R_2$, $\underline{x}^R_2$ in red dashed-dotted lines) and in \cite{mazenc2014interval} ($\overline{x}^{TV}_2$ and $\underline{x}^{TV}_2$ in green dotted lines). 
Moreover, the middle plot shows the  projection of the polytopic estimates (blue) and interval estimates (red \cite{meslem2021reachability}/green \cite{mazenc2014interval}) onto the ($x_2,x_3$) plane, while the right plot depicts their volumes, where we had to scale down the very large volumes of the approach in \cite{mazenc2014interval} by a factor of 1\% for visual clarity and better comparison with the other two approaches.

As in the CT example, the polytopic sets enclose the system trajectory (black) with significantly reduced conservatism, underscoring the improved performance of our approach in terms of the volumes of the set estimates.
The consistently lower volume of our polytopic sets (blue) demonstrates that our approach yields both tighter estimates and faster convergence when compared with existing time-varying interval observers.

\vspace{-0.1cm}
\section{Conclusion and Future Work}
\label{sec:conclusion}
This work presented a novel framework for designing polytopic and interval observers for detectable linear continuous-time and discrete-time systems subjected to additive bounded uncertainties. The proposed time-invariant polytopic and interval observers are inherently correct by design, i.e., the set estimates enclose the true state at all times, without 
any positivity assumptions or constraints, which is achieved through the use of mixed-monotone decomposition and embedding system techniques, and the volumes of the set estimates are also proven to be input-to-state stable. For this design, we introduced a potentially non-square (lifted) coordinate transformation technique that builds upon techniques for polyhedral Lyapunov functions, which enabled the design of polytopic and interval observers for \emph{all detectable linear systems}. Our paper also resolved an open question on the existence of stable time-invariant interval observers for any detectable system, answering it in the affirmative, although it may sometimes require lifted/more states. 
Finally, we demonstrated that the proposed polytopic and interval observers outperformed existing time-varying interval observer designs in terms of obtaining tighter/smaller set estimates. As future work, we aim to extend this framework to nonlinear continuous-time and discrete-time systems, as well as hybrid or switched systems with known or unknown inputs.
\vspace{-0.12cm}

\bibliographystyle{IEEEtran}
\bibliography{main}

@article{meslem2021reachability,
  title={Reachability-Based Approach to Design Interval State Observers},
  author={Meslem, Nacim and Hably, Ahmad},
  journal={IFAC-PapersOnLine},
  volume={54},
  number={9},
  pages={125--130},
  year={2021},
  publisher={Elsevier}
}

@inproceedings{pati2025computationally,
	title={Computationally Efficient {$L_1$} and $\mathcal{H}_\infty$ Optimal Interval Observer Design},
  author={Pati, Tarun and Khajenejad, Mohammad and Yong, Sze Zheng},
	booktitle = {European Control Conference (ECC)},
	organization = {IEEE},
	OPTnote = {Accepted. Available from: \url{https://tinyurl.com/52ka752c}},
	title = {Interval observer composed of observers for nonlinear systems},
	year = {2025}}

@article{khajenejad2021guaranteed2,
  title={Guaranteed State Estimation via Direct Polytopic Set Computation for Nonlinear Discrete-Time Systems},
  author={Khajenejad, M. and Shoaib, F. and Yong, S. Z.},
  journal={IEEE Control Systems Letters},
  volume={6},
  pages={2060--2065},
  year={2021},
  publisher={IEEE}
}

@article{khajenejad_H_inf_2022,
  title={$\mathcal{H}_{\infty}$-Optimal Interval Observer Synthesis for Uncertain Nonlinear
Dynamical Systems via Mixed-Monotone Decompositions},
  author={Khajenejad, Mohammad and Yong, Sze Zheng},
  journal={IEEE Control Systems Letters},
  volume={6},
  pages={3008--3013},
  year={2022},
  publisher={IEEE}
}

@article{mazenc2014interval,
  title={Interval observers for discrete-time systems},
  author={Mazenc, F. and Dinh, T.N. and Niculescu, S.I.},
  journal={International Journal of Robust and Nonlinear Control},
  volume={24},
  number={17},
  pages={2867--2890},
  year={2014},
  publisher={Wiley Online Library}
}

@inproceedings{Althoff2015ARCH,
	author			= {Matthias Althoff},
	title			= {An Introduction to {CORA} 2015},
	year			= {2015},
	booktitle		= {Workshop on Applied Verification for Continuous and Hybrid Systems},
	doi			= {10.29007/zbkv},
	pages			= {120--151}
  }

@article{briat2016interval,
  title={Interval peak-to-peak observers for continuous-and discrete-time systems with persistent inputs and delays},
  author={Briat, C. and Khammash, M.},
  journal={Automatica},
  volume={74},
  pages={206--213},
  year={2016},
  publisher={Elsevier}
}

@article{efimov2016design,
  title={Design of interval observers for uncertain dynamical systems},
  author={Efimov, D. and Ra{\"\i}ssi, T.},
  journal={Automation and Remote Control},
  volume={77},
  number={2},
  pages={191--225},
  year={2016},
  publisher={Springer}
}

@article{chambon2016overview,
	author = {Chambon, E. and Burlion, L. and Apkarian, P.},
	journal = {IET Control Theory \& Applications},
	number = {11},
	pages = {1258--1268},
	publisher = {IET},
	title = {Overview of linear time-invariant interval observer design: towards a non-smooth optimisation-based approach},
	volume = {10},
	year = {2016}}

@article{mazenc2011interval,
	author = {Mazenc, F. and Bernard, O.},
	journal = {Automatica},
	number = {1},
	pages = {140--147},
	publisher = {Elsevier},
	title = {Interval observers for linear time-invariant systems with disturbances},
	volume = {47},
	year = {2011}}

@article{tahir2021synthesis,
	author = {Tahir, A.M. and A{\c{c}}{\i}kme{\c{s}}e, B.},
	journal = {IEEE Control Systems Letters},
	publisher = {IEEE},
	title = {Synthesis of Interval Observers for Bounded {J}acobian Nonlinear Discrete-Time Systems},
	year = {2021},
	volume={6},
  pages={764--769},
  year={2021},
  publisher={IEEE}
}

@article{raissi2011interval,
	author = {Ra{\"\i}ssi, T. and Efimov, D. and Zolghadri, A.},
	journal = {IEEE Transactions on Automatic Control},
	number = {1},
	pages = {260--265},
	publisher = {IEEE},
	title = {Interval state estimation for a class of nonlinear systems},
	volume = {57},
	year = {2011}}

@article{pati2022l₁,
  title={{$L_1$}-Robust Interval Observer Design for Uncertain Nonlinear Dynamical Systems},
  author={Pati, Tarun and Khajenejad, Mohammad and Daddala, Sai Praveen and Yong, Sze Zheng},
  journal={IEEE Control Systems Letters},
  volume={6},
  pages={3475--3480},
  year={2022},
  publisher={IEEE}
}

@article{shamma1999set,
	author = {Shamma, J.S. and Tu, K.},
	journal = {IEEE Transactions on Automatic Control},
	number = {2},
	pages = {253--264},
	title = {Set-valued observers and optimal disturbance rejection},
	volume = {44},
	year = {1999}}

@article{blanchini2012convex,
	author = {Blanchini, F. and Sznaier, M.},
	journal = {IEEE Transactions on Automatic Control},
	number = {1},
	pages = {216--221},
	title = {A Convex Optimization Approach to Synthesizing Bounded Complexity $\ell^{\infty}$ Filters},
	volume = {57},
	year = {2012}}

@book{horn2012matrix,
	author = {Horn, R.A. and Johnson, C.R.},
	publisher = {Cambridge University Press},
	title = {Matrix analysis},
	year = {2012}}

@article{rego2020guaranteed,
	author = {Rego, B.S. and Raffo, G.V. and Scott, J.K. and Raimondo, D.M},
	journal = {Automatica},
	pages = {108614},
	publisher = {Elsevier},
	title = {Guaranteed methods based on constrained zonotopes for set-valued state estimation of nonlinear discrete-time systems},
	volume = {111},
	year = {2020}}

@article{efimov2013interval,
	author = {Efimov, D. and Ra{\"\i}ssi, T. and Chebotarev, S. and Zolghadri, A.},
	journal = {Automatica},
	number = {1},
	pages = {200--205},
	publisher = {Elsevier},
	title = {Interval state observer for nonlinear time varying systems},
	volume = {49},
	year = {2013}}

@article{khajenejad2023tight,
  title={Tight remainder-form decomposition functions with applications to constrained reachability and guaranteed state estimation},
  author={Khajenejad, Mohammad and Yong, Sze Zheng},
  journal={IEEE Transactions on Automatic Control},
  year={2023},
  volume={68},
  number={12},
  pages={7057--7072},
  publisher={IEEE}
}

@INPROCEEDINGS {YALMIP,
	AUTHOR = { J. {L\"{o}fberg} },
	TITLE = {{YALMIP} : A Toolbox for Modeling and Optimization in {MATLAB} },
	booktitle={IEEE International Symposium on Computer Aided Control System Design},
  pages={284--289},
  year={2004},
  organization={IEEE}
}

@misc{gurobi,
	author = {{Gurobi Optimization, Inc.}},
	title = {Gurobi Optimizer Reference Manual},
	year = {2015},
}

@article{meslem_Hinf_2020,
author = {Meslem, N. and Martinez, J. and Ramdani, N. and Besançon, G.},
title = {An $\mathcal{H}_\infty$ interval observer for uncertain continuous-time linear systems},
journal = {International Journal of Robust and Nonlinear Control},
volume = {30},
number = {5},
pages = {1886-1902},
keywords = {interval observer, , reachability analysis, set-integration, uncertain systems, synthesis},
abstract = {Summary To design robust interval observers for uncertain continuous-time linear systems, a new set-integration approach is proposed to compute trajectory tubes for the estimation error. Because this approach, the order-preserving condition on the dynamics of the estimation error is no longer required. Therefore, synthesis methods can be used to compute observer gains that reduce the impact of the system uncertainties on the accuracy of the estimated state enclosures. The performance of the proposed approach is showcased through illustrative numerical examples.},
year = {2020}
}

@article{polanski1995infinity,
  title={On infinity norms as {L}yapunov functions for linear systems},
  author={Polanski, Andrzej},
  journal={IEEE Transactions on Automatic Control},
  volume={40},
  number={7},
  pages={1270--1274},
  year={1995},
  publisher={IEEE}
}

@article{christophersen2007further,
  title={Further results on ``Infinity norms as {L}yapunov functions for linear systems”},
  author={Christophersen, Frank J and Morari, Manfred},
  journal={IEEE Transactions on Automatic Control},
  volume={52},
  number={3},
  pages={547--553},
  year={2007},
  publisher={IEEE}
}

@article{molchanov1986liapunov,
  title={Lyapunov functions that specify necessary and sufficient conditions of absolute stability of nonlinear nonstationary control systems. III},
  author={Molchanov, Alexandre P and Piatnitskii, ES},
  journal={Automation and Remote Control},
  volume={47},
  number={5},
  pages={620--630},
  year={1986}
}

@article{Mazenc2010,
  author={Mazenc, Frédéric and Bernard, Olivier},
  journal={IEEE Transactions on Automatic Control}, 
  title={Asymptotically Stable Interval Observers for Planar Systems With Complex Poles}, 
  year={2010},
  volume={55},
  number={2},
  pages={523--527}}

@article{Torsten1975OnLogarithmic,
author = {Str\"{o}m, Torsten},
title = {On Logarithmic Norms},
journal = {SIAM Journal on Numerical Analysis},
volume = {12},
number = {5},
pages = {741-753},
year = {1975},
}

\balance

\appendix \normalsize

This appendix recaps the   construction of $P$ and $Q$ for CT and DT systems from \cite[Section 3]{polanski1995infinity}  and  \cite[Section 4]{christophersen2007further} (using a common/modified set of notations).

\subsubsection*{Step 1: Real Jordan Canonical Form}
First, recall that any real matrix $A_{cl}$ admits a real Jordan canonical form \cite[Section 3.4.1]{horn2012matrix}, i.e., there exists an invertible matrix $T$ such that $T^{-1} A_{cl} T = J =\textnormal{diag}(J_{r,1},\dots, J_{r,n_r},J_{c,1} \dots, J_{c,n_r}) \in \mathbb{R}^{n \times n}$, where $\textnormal{diag}$ returns a block diagonal matrix of the input matrices, with the first $n_r$ matrices being associated with real eigenvalues $\lambda_{R,i}$ of multiplicity $\ell_{r,i}$ of $A_{cl}$ and the other $n_c$ are associated with the complex conjugate eigenvalues $\lambda_{c,i}=\sigma_i \pm \omega_i$ of multiplicity $\ell_{c,i}$ of $A_{cl}$. Therefore, $n=\sum_{i=1}^{n_r} \ell_{r,i}+ \sum_{i=1}^{n_c} 2 \ell_{c,i}$, and $J_{r,i} \in \mathbb{R}^{\ell_{r,i} \times \ell_{r,i}},\ i=1,\dots,n_r$ and $J_{c,i} \in \mathbb{R}^{2\ell_{c,i} \times 2\ell_{c,i}}, \ i=1,\dots,n_c$ are given by 

\vspace{-0.4cm}
\begin{align*}\small
    J_{r,i} &= \begin{bmatrix}
    \lambda_{r,i} & 1                  & \cdots & 0 \\
    0         & \lambda_{r,i}          & \cdots & 0 \\
    \vdots    & \vdots        & \ddots & \vdots \\
    0         & 0             & 0         & \lambda_{r,i}
    \end{bmatrix}\!, \, 
        J_{c,i} =\begin{bmatrix}
        \Lambda_{c,i}       & I_2                & \cdots & 0 \\
        0         & \Lambda_{c,i}              & \cdots & 0 \\
        \vdots    & \vdots        & \ddots & \vdots \\
        0         & 0             & 0      & \Lambda_{c,i}
        \end{bmatrix}\!,
    \end{align*}
with $\Lambda_{c,i}=\begin{bmatrix} \sigma_i & \omega_i \\ -\omega_i & \sigma_i \end{bmatrix}$ and $I_2 = \begin{bmatrix} 1 & 0 \\ 0 & 1 \end{bmatrix}$.

\subsubsection*{Step 2: Determination of $m_i$ and construction of $P_{m_i}$ and $Q_{m_i}$} Next, for each complex conjugate eigenvalues, $\lambda_{c,i}=\sigma_i \pm \omega_i$, $i=1,\dots,n_c$, choose $m_i$ as any $c_i\ge 2$ such that \eqref{eq:m_i} holds, e.g., the smallest $c_i$ (via line search) such that \eqref{eq:m_i} holds to obtain the smallest $\hat{\mathcal{G}}$ in \eqref{eq:observer}. Then, we construct the corresponding $P_{m_i}$ (with \text{rank}$(P_{m_{i}}) = 2$) and $Q_{m_i}$:
\begin{equation}\small
P_{m_{i}} = \begin{bmatrix}
1 & 0 \\
\cos\frac{\pi}{m_{i}} & \sin\frac{\pi}{m_{i}} \\
\vdots & \vdots \\
\cos\frac{(m_{i} - 1)\pi}{m_{i}} & \sin\frac{(m_{i} - 1)\pi}{m_{i}}
\end{bmatrix}\in \mathbb{R}^{m_i\times 2},
\end{equation}
\paragraph{CT Systems} Let $\xi_i \triangleq \sigma_i-\omega_i \cot(\frac{\pi}{m_i})$ and $\psi_i \triangleq \omega_i \csc(\frac{\pi}{m_i})$ and construct 
\begin{align} \label{eq:Q_CT}\small
Q_{m_{i}} = 
\begin{bmatrix}
\xi_i & \psi_i & 0 & \cdots & 0 \\
0 & \xi_i & \psi_i & \cdots & 0 \\
\vdots & \ddots & \ddots & \ddots & \vdots \\
0 & 0 & 0 & \xi_i & \psi_i \\
-\psi_i & 0 & 0 & 0 & \xi_i
\end{bmatrix} \in \mathbb{R}^{m_{i} \times m_{i}};
\end{align}
\paragraph{DT Systems} Solve the following linear program
\begin{align*}\small
    \gamma_i=&\min \eta_i^\top\mathbf{1}_{m_i}\\
    &\textnormal{s.t.} \begin{bmatrix} \sigma_i \\ \omega_i \end{bmatrix} = P_{m_{i}}^\top \zeta_{i}, -\eta_i \le \zeta_{i}  \leq \eta_i,  \eta_i^\top\mathbf{1}_{m_i} < 1,
\end{align*}
with \( \zeta_{i} = \begin{bmatrix} \zeta_{i,1} & \dots & \zeta_{i,m_{i}} \end{bmatrix}^\top \)and construct
\begin{equation} \label{eq:Q_DT}\small
Q_{m_{i}} = 
\begin{bmatrix}
\zeta_{i,1} & \zeta_{i,2} & \zeta_{i,3} &\cdots & \zeta_{i,m_{i}} \\
-\zeta_{i,m_{i}} & \zeta_{i,1} & \zeta_{i,2} & \cdots & \zeta_{i,m_{i}-1} \\
\vdots & \ddots & \ddots & \ddots & \vdots \\
-\zeta_{i,3} & -\zeta_{i,4} & -\zeta_{i,5} &\cdots & \zeta_{i,2} \\
-\zeta_{i,2} & -\zeta_{i,3} & -\zeta_{i,4} & \cdots & \zeta_{i,1}
\end{bmatrix} \in \mathbb{R}^{m_{i} \times m_{i}}.
\end{equation}

\subsubsection*{Step 3: Determine $h_{r,i}$, $h_{c,i}$ and construct $H_{r,i}$ and $H_{c,i}$} 
For each real eigenvalues, $\lambda_{r,i}$, choose any $h_{r,i}$ that satisfies
\begin{gather*}
    \begin{array}{c}\text{Re}(\lambda_{r,i}) + h_{r,i}  < 0 \quad \textnormal{(CT)},\\
    |\lambda_{r,i}|+ h_{r,i}  < 1 \quad \textnormal{(DT)},
    \end{array}
\end{gather*}
and construct $H_{r,i} = \textnormal{diag}(1,h_{r,i}^{-1},\dots,h_{r,i}^{1-\ell_{r,i}})$.
For each complex conjugate eigenvalues, $\lambda_{c,i}=\sigma_i \pm \omega_i$, $i=1,\dots,n_c$, choose any $h_{c,i}$ that satisfies
\begin{gather*}
    \begin{array}{c}\xi_i+\psi_i+ h_{c,i} < 0\quad \textnormal{(CT)},\\ 
    \gamma_i +h_{c,i}<1 \quad \textnormal{(DT)},
    \end{array}
\end{gather*} 
and construct $H_{c,i} = \textnormal{diag}(I_2,h_{c,i}^{-1}I_2,\dots,h_{c,i}^{1-\ell_{c,i}}I_2)$.

\subsubsection*{Step 4: Construct $P$} Then, we can construct $$P=\textnormal{diag}(P_r,P_c)T^{-1},$$ where $P_r =\textnormal{diag}(H_{r,1},\dots,H_{r,n_r})$ and $P_c =\textnormal{diag}(\tilde{P}_{m_1}H_{c,1},\dots,\tilde{P}_{c,m_{n_c}}H_{c,n_c})$ with $\tilde{P}_{m_i}\triangleq\textnormal{diag}(P_{m_i},\dots,P_{m_i})\in \mathbb{R}^{m_i\ell_{c,i}\times 2\ell_{c,i}}$.

\subsubsection*{Step 5: Construct $Q$} Finally, we can construct $$\small Q\!=\!\textnormal{diag}(Q_r,Q_c),$$ with $Q_r \!=\!\textnormal{diag}(Q_{r,1},\dots,Q_{r,n_r})$ and $Q_c \!=\!\textnormal{diag}(Q_{c,1},\dots,Q_{c,n_c})$, as well as $Q_{r,i}\in \mathbb{R}^{\ell_{r,i} \times \ell_{r,i}}$, $Q_{c,i}\in \mathbb{R}^{m_{i}\ell_{c,i} \times m_{i}\ell_{c,i}}$ given by
\begin{align*}\small
    \begin{array}{l}Q_{r,i}\!\!=\!\!  \begin{bmatrix}
\lambda_{r_{i}} & h_{r,i}  & \cdots & 0 \\
0 & \lambda_{r_{i}} & \cdots & 0\\
\vdots & \ddots & \ddots & \vdots\\
0 & \cdots & \lambda_{r_{i}} & h_{r,i}  \\
 \cdots & \cdots & 0 & \lambda_{r_{i}}
\end{bmatrix}\!,\end{array} \!\!  
   \begin{array}{l} Q_{c,i} \!=\!\! \begin{bmatrix}
Q_{m_{i}} & h_{c,i} I_{m_{i}} & \cdots & 0 \\
0 & Q_{m_{i}} & \cdots & 0\\
\vdots & \ddots & \ddots & \vdots\\
0 & \cdots & Q_{m_{i}} & h_{c,i} I_{m_{i}}  \\
 \cdots & \cdots & 0 & Q_{m_{i}}
\end{bmatrix}\!.\end{array}\hspace{-0.3cm} 
\end{align*}

\end{document}